\newcommand{\R}{\mathbb{R}}
\newcommand{\Z}{\mathbb{Z}}
\newcommand{\B}{\mathbb{B}}
\newcommand{\Q}{\mathbb{Q}}
\newcommand{\NP}{\mathscr{NP}}
\newcommand{\Pclass}{\mathscr{P}}
\newcommand{\polytope}{\mathcal{P}}
\newcommand{\CR}{\textsc{CR}}
\newcommand{\Ccal}{\mathcal{C}}
\newcommand{\consets}{\mathcal{S}}
\DeclareMathOperator{\convexhull}{\rm{convex-hull}}
\theoremstyle{plain}
\newtheorem{theorem}{Theorem}
\newtheorem{proposition}[theorem]{Proposition}
\newtheorem{corollary}[theorem]{Corollary}
\newtheorem{lemma}[theorem]{Lemma}
\newtheorem{claim}{Claim}[theorem]
\theoremstyle{definition}
\newtheorem{problem}{Problem}
\providecommand{\keywords}[1]{\textit{Keywords:} #1}
\begin{document}

\title{Polyhedral study of the Convex Recoloring problem\protect\footnote{E-mails: \href{mailto:mcampelo@lia.ufc.br}{\texttt{mcampelo@lia.ufc.br}}, \href{mailto:phablo@dcc.ufmg.br}{\texttt{phablo@dcc.ufmg.br}}, \href{mailto:joel.cruz@sti.ufc.br}{\texttt{joel.cruz@sti.ufc.br}}
}}

\author[1]{Manoel Campêlo\thanks{Partially supported by CNPq-Brazil (Proc. 443747/2014-8, 305264/2016-8) and FUNCAP (PNE-0112-00061.01.00/16)}}
\author[2]{Phablo F. S. Moura\thanks{Supported by  grants \#2015/11937-9, \#2016/21250-3 and \#2017/22611-2, São Paulo Research Foundation (FAPESP), and CAPES.}}
\author[3]{Joel Soares}

\affil[1]{Dep. de Estatística e Matemática Aplicada, Universidade Federal do Ceará, Brazil}
\affil[2]{Dep. de Ciência da Computação, Universidade Federal de Minas Gerais, Brazil}
\affil[3]{Secretaria de Tecnologia da Informação, Universidade Federal do Ceará, Brazil}

\maketitle

\begin{abstract}
	A coloring of the vertices of a connected graph is convex if each color class induces a connected subgraph.  
	We address the convex recoloring (\CR) problem defined as follows.
	Given a graph~$G$ and a coloring of its vertices, recolor a minimum number of vertices of~$G$ so that the resulting coloring is convex.  
	This problem, known to be~$\NP$-hard even on paths, was firstly motivated by applications on perfect phylogenies.
	In this work, we study \CR\ on general graphs from a polyhedral point of view. 
	First, we introduce a full-dimensional polytope based on the idea of connected subgraphs, and present a class of valid inequalities with righthand side one that comprises all facet-defining inequalities with binary coefficients when the input graph is a tree.
	Moreover, we define a general class of inequalities with righthand side in~$\{1, \ldots, k\}$, where~$k$ is the amount of colors used in the initial coloring, and show sufficient conditions for validity and facetness of such inequalities.
	Finally, we report on computational experiments for an application on mobile networks that can be modeled by the polytope of \CR\ on paths. We evaluate the potential of the proposed inequalities to reduce the integrality gaps.

\end{abstract}%

\keywords{Convex recoloring, Connected assignment in arrays, facet, cutting plane.}


\section{Introduction}
A \emph{total coloring} of a graph~$G$ is a function~$C$ that assigns to each vertex
in~$V(G)$ a color, that is, $C \colon V(G) \to \Ccal$, where~$\Ccal$ is a set of colors.
A \emph{partial coloring} (or simply, a \emph{coloring}) of~$G$ is a
function~$C \colon V(G) \to \Ccal \cup \{\emptyset\}$, where~$\emptyset$ indicates absence of
color.  We say that a vertex~$v$ in~$G$ is \emph{uncolored} if~$C(v)= \emptyset$, and we
say that~$G$ is \emph{$k$-colored} if its coloring uses $k$~colors.  For each~$c \in \Ccal$, the \emph{color class}~$c$ (denoted by~$C^{-1}(c)$) is the set of
vertices in~$G$ that have color~$c$, that is, $C^{-1}(c):=\{ v \in V(G) \colon C(v)=c\}$.
Note that the coloring defined here differs from the classic vertex (proper) coloring, in which
adjacent vertices have different colors.

A \emph{totally colored graph} is a pair~$(G,C)$ consisting of a graph~$G$ and a total coloring $C$ of its vertices.  
A total coloring~$C$ is \emph{convex} if the color class~$c$ induces a connected subgraph of~$G$  for all~$c \in \Ccal$.  Analogously, we define a (partially) \emph{colored graph}.
A \emph{convex coloring}~$C \colon V(G) \to \Ccal \cup \{\emptyset\}$ is a coloring that can be extended to a convex total coloring by solely assigning a color in~$\Ccal$ to each uncolored vertex.
Given a colored graph~$(G,C)$, any other coloring~$C^\prime \colon V(G) \to \Ccal\cup\{\emptyset\}$ is a \emph{recoloring}
of~$(G,C)$, where~$\Ccal$ is the same set of colors of the initial coloring~$C$.  
A vertex~$v$ is said to be \emph{recolored} (by~$C^\prime$) if~$C(v)\neq \emptyset$ and~$C(v) \neq C^\prime(v)$.
We address the following problem.

\begin{problem}{\textsc{Convex Recoloring} $(\CR)$  \\}
      \textsc{Instance:} A connected graph~$G$, a coloring~$C\colon V(G) \to \Ccal\cup \{\emptyset\}$ and a weight function $w \colon V(G) \to \Q_\geq$.\\
      \textsc{Find:} A convex recoloring~$C^\prime$ of~$(G,C)$.\\
      \textsc{Goal:} Minimize~$\sum_{v \in R_C(C^\prime)} w(v)$, where~$R_C(C^\prime):= \{ v \in V(G) \colon C(v)\neq \emptyset$ 
      and~$ C(v)\neq C^\prime(v) \}$ is the set of vertices recolored by~$C^\prime$.
\end{problem}

The $\CR$ problem has been intensively investigated under different approaches: approximation algorithms, inapproximability, integer linear programming and heuristics.
We next mention some results for totally colored graphs.  
Kanj and Kratsch~\cite{KanKra09} proved that the problem is $\NP$-hard for paths even if each color appears at most twice.
Camp\^elo~et~al.~\cite{CamHuiSamWak14} showed that the $\CR$ problem is $\NP$-hard on unweighted $2$-colored grids.  Approximation algorithms for the unweighted case have also been designed: with ratio~$(2+\varepsilon)$ for bounded
treewidth graphs~\cite{KamTho12}, and ratio~$2$ for paths~\cite{MorSni07}.  
The first constant factor  approximation algorithm  for \CR\ on general graphs was recently proposed by Bar-Yehuda et al.~\cite{BarKutRaw18}. 
They designed a $3/2$-approximation for graphs in which each color appears at most twice.
On the other hand, Moura and Wakabayashi~\cite{MouWak2019} showed that unweighted $\CR$ cannot be approximated within a ratio of $n^{1-\varepsilon}$ (for any $\varepsilon > 0$) on $n$-vertex $k$-colored bipartite graphs, unless $\Pclass=\NP$.

An integer linear formulation for~$\CR$ on general graphs was introduced by Camp\^elo~et~al. in~\cite{CamFreLimMouWak16}. 
They presented a detailed polyhedral study and computational experiments on trees.
A compact formulation  for $\CR$ on trees was proposed by Chopra~et~al.~in~\cite{ChoFilLeeRyuShiVan16}. 
They showed that their ILP formulation dominates the one described in~\cite{CamFreLimMouWak16}.

In 2017, Moura~\cite{Mou17} (the second author of this paper) introduced an ILP formulation based on connected subgraphs for \CR\ on general graphs and showed preliminary computational experiments of a column-generation algorithm on trees.
Independently, Chopra~et~al.~\cite{ChoErdKimShi17} also devised a column-generation approach to \CR\ restricted to trees. 
We remark that the ILP formulation in~\cite{ChoErdKimShi17} is a particular case of the model proposed in~\cite{Mou17},
and that no polyhedral study is presented in those works.
The computational experiments in~\cite{ChoErdKimShi17}  and~\cite{Mou17} indicate that a column-generation approach for \CR\ on trees is very efficient if the number of colors is large when compared with the size of the tree.

Recently, a GRASP heuristic for \CR\ on general graphs was devised by Dantas et al.~\cite{DanSouDia2019}.
The authors compare their heuristic algorithm with a Branch-and-Bound algorithm based on the ILP formulation due to Camp\^elo~et~al.~\cite{CamFreLimMouWak16}.


In Section~\ref{sec:polytope}, we introduce a polytope based on the idea of connected subgraphs and show some properties of its facet-defining inequalities.
In Section~\ref{sec:bin-ineq}, we present a class of valid inequalities with righthand side one and show it
comprises all facet-defining inequalities with binary coefficients when the input graph is a tree.
We define a general class of inequalities with righthand side in~$\{1, \ldots, k\}$, where~$k$ is the amount of colors used in the coloring, in Section~\ref{sec:general-ineq}.
Furthermore, we present sufficient conditions for validity and facetness of such inequalities.
The potential of these inequalities to reduce the integrality gaps is evaluated in Section~\ref{sec:experiments}, particularly in the context of an application in 4G mobile networks.
\section{The connected subgraph polytope}
\label{sec:polytope}

We denote by $(G,C,w)$ an instance of~$\CR$ consisting of a connected graph~$G$, a partial coloring $C \colon V(G)  \to
\Ccal \cup \{\emptyset\}$  and  a weight function $w \colon V(G) \to \Q_\geq$. 
Let~$V_\emptyset$ be the set of uncolored vertices of~$(G,C)$.
Henceforth, we assume that the weight of each vertex in~$V_\emptyset$ is~$0$. 
Let $n:=|V(G)|\geq 3$ and $k:=|\Ccal|\geq 2$.  

Note that minimizing the sum of the weights of recolored vertices is equivalent to maximizing
the sum of the weights of vertices that keep their initial colors.
Indeed, if~$C^*$ be a feasible solution to instance  $(G,C,w)$ of $\CR$,
and $R(C^*)= \{ v \in V(G) \colon C(v)\neq \emptyset$ and~$ C(v)\neq C^*(v) \}$ is the the set of vertices recolored by~$C^*$,
then one may easily verify that 
\(\sum_{v \in R(C^*)} w(v) = \sum_{v \in V(G)} w(v) - \sum_{v \in V(G) \setminus R(C^*) } w(v).\)
This remark is important since we shall define our integer linear programming (ILP) formulation for $\CR$ as a maximization problem.


In what follows, we propose an ILP formulation for the convex recoloring problem on general graphs.
Before presenting our formulation, let us introduce some notation.
For each~$v \in V(G)$ and~$c \in \Ccal$, we define a constant~$w_{v,c}$ which is $w(v)$ if~$C(v) = c$, and~$0$ otherwise.
Let $\consets(G)$ be the collection of all subsets of~$V(G)$ that induce connected subgraphs of~$G$.
We denote by~$\eta$ the cardinality of~$\consets(G)$.
We prefer the simplified notation~$\consets$ when~$G$ is clear from the context.
For each $H \in \consets(G)$ and  $c \in \Ccal$,  we say that $H$ is $c$-\emph{monochromatic} if color~$c$ is assigned to all vertices in~$H$, 
and we define $w_{H, c} = \sum_{v \in V(H)} w_{v,c}$ as the weight of making~$H$ $c$-monochromatic, that is, the total weight given by~$H$ if we assign color~$c$ to all its vertices.

We define, for each $H \in \consets(G)$ and  $c \in \Ccal$, a binary variable $x_{H,c}$ with the following meaning: 
$x_{H,c} = 1$  if $H$ is $c$-monochromatic.
Let~$\consets_\cap(v) = \{H \in \consets(G) \colon v \in H\}$.
We introduce the following formulation for the Convex Recoloring problem.

\begin{align} 
 	\!\!\!  \!\!\!&& \max \:&  \sum_{H  \in \consets(G)} \sum_{c \in \Ccal}  w_{H,c} \:  x_{H,c} \label{fob} \\
 	&& \text{s.t.} \:& \sum_{H \in \consets_{\cap}(v) } \sum_{c \in \Ccal} x_{H,c}  \leq  1, && \!\!\!\text{ for all } v \in V(G) \label{ineq:vertices}\\ 
 	&&	& \sum_{H \in \consets(G)}  x_{H,c}  \leq  1, && \!\!\!\text{ for all } c \in \Ccal \label{ineq:colors} \\
 	&&	& x_{H,c}  \in  \B, && \!\!\!\text{ for all } H \in \consets(G), \:c \in \Ccal \label{ineq:integer}
\end{align}

Inequalities~\eqref{ineq:vertices} require that each vertex receives at most one color, 
while inequalities~\eqref{ineq:colors}  require that each color is assigned to at most one connected subgraph.
This formulation was introduced in~\cite{Mou17} and, independently, in~\cite{ChoErdKimShi17} for \CR\ restricted to trees.
We next show that every vector that satisfies constraints~\eqref{ineq:vertices}--\eqref{ineq:integer} corresponds to a convex coloring of~$G$, and vice-versa. 

Given a convex (partial) coloring~$C \colon V(G) \to \Ccal$, we define the vector~$\chi(C) \in \B^{\eta k}$ as follows.
For every~$H \in \consets(G)$ and $c \in \Ccal$,  $\chi(C)_{H,c}=1$ iff $H = G[C^{-1}(c)]$.
Furthermore, let us define \[\polytope(G, C) = \convexhull(\{ x \in \R^{\eta k} \colon x \text{ satisfies } \eqref{ineq:vertices}, \eqref{ineq:colors} \text{ and } \eqref{ineq:integer}\}).\]


\begin{proposition}\label{prop:correctness}
 $\polytope(G, C) =  \convexhull(\{\chi(C^\prime)  \in \B^{\eta k} \colon C^\prime \text{ is a convex col. of } G\})$.
\end{proposition}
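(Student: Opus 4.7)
The plan is to establish a bijection between the integer feasible points of the system \eqref{ineq:vertices}--\eqref{ineq:integer} and the convex (partial) colorings of~$G$, where a point maps to the incidence vector $\chi(C')$ of the corresponding coloring~$C'$. Once that is in hand, taking convex hulls on both sides yields the claimed identity, since the extreme points of $\polytope(G,C)$ are exactly its integer feasible points.

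For the easy direction, I would start with a convex coloring $C'\colon V(G)\to \Ccal \cup \{\emptyset\}$ and verify that $\chi(C')$ satisfies \eqref{ineq:vertices}--\eqref{ineq:integer}. By definition of $\chi(C')$, for every color $c\in\Ccal$ the only possibly nonzero entry is $\chi(C')_{H,c}$ with $H=G[C'^{-1}(c)]$, which belongs to $\consets(G)$ precisely because $C'$ is convex (and the entry is~$0$ when $C'^{-1}(c)=\emptyset$). This immediately gives \eqref{ineq:colors}. For \eqref{ineq:vertices}, a vertex $v$ lies in at most one color class of $C'$, so the double sum contains at most one nonzero summand. Integrality is clear.

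For the harder direction, I would start with any $x \in \{0,1\}^{\eta k}$ satisfying \eqref{ineq:vertices}--\eqref{ineq:integer} and recover a convex coloring $C'$ with $\chi(C')=x$. Constraint \eqref{ineq:colors} allows me to pick, for each $c\in\Ccal$, the unique $H_c\in\consets(G)$ with $x_{H_c,c}=1$ (or to set $H_c=\emptyset$ if no such subgraph exists). Constraint \eqref{ineq:vertices} then says that the chosen $H_c$'s are pairwise disjoint, so the rule $C'(v)=c$ if $v\in H_c$ and $C'(v)=\emptyset$ otherwise defines a partial coloring unambiguously. Since each $H_c$ belongs to $\consets(G)$ and equals $G[C'^{-1}(c)]$, each nonempty color class of $C'$ induces a connected subgraph; hence $C'$ is convex and $\chi(C')=x$.

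The only real subtlety, rather than an obstacle, is the handling of empty color classes and the convention on whether $\consets(G)$ contains the empty set; both readings give the same correspondence provided $\chi(C')_{H,c}$ is taken to be zero whenever $C'^{-1}(c)=\emptyset$. Once the bijection between integer solutions and convex colorings is established, the equality of the two convex hulls follows at once, because both are the convex hull of the same finite set of $\{0,1\}$-vectors in $\R^{\eta k}$.
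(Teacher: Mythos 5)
Your proof is correct and takes essentially the same route as the paper's: both directions establish the correspondence between integer feasible points of \eqref{ineq:vertices}--\eqref{ineq:integer} and convex colorings, using \eqref{ineq:colors} to extract at most one connected subgraph per color and \eqref{ineq:vertices} to get pairwise disjointness. You merely spell out some details the paper leaves as ``easily checked.''
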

\begin{proof}
  Let~$C^\prime$ be a convex coloring of~$G$.
  One may easily check that $\chi(C^\prime)$ satisfies inequalities~\eqref{ineq:vertices}, \eqref{ineq:colors} and~\eqref{ineq:integer}.
  Therefore, we conclude that~$\chi(C^\prime)$ belongs to~$\polytope(G, C)$.
  
  Consider now an integer point~$x \in \polytope(G, C)$ and let~$\consets^\prime = \{H \in \consets(G) \colon x_{H,c} = 1 \text{ for some } c \in \Ccal\}$.
  We define~$C^\prime \colon V(G) \to \Ccal$ in the following way.
  For every~$H \in \consets(G)$ and~$c \in \Ccal$ such that~$x_{H,c} = 1$,  we set $C^\prime (v)=c$ for all~$v \in H$.
  For all~$v \in V(G) \setminus (\bigcup_{H \in \consets^\prime} H)$, we define~$C^\prime (v) = \emptyset$.
  
  Since~$x$ satisfies inequalities~\eqref{ineq:vertices}, it follows that~$x_{H, c} + x_{H^\prime, d} \leq 1$ for all~$H, H^\prime \in \consets(G)$ such that $H \cap H^\prime \neq \emptyset$, and for all colors~$c,d \in \Ccal$.
  Hence, the elements of~$\consets^\prime$ are pairwise disjoint.
  Because $x$ satisfies inequalities~\eqref{ineq:colors}, every color in~$\Ccal$ is associated with at most one element of~$\consets^\prime$.
  Using these remarks and by the fact that every set~$H \in \consets^\prime$ induces a connected subgraph of~$G$, we conclude that~$C^\prime$ is a convex coloring of~$G$. 
\end{proof}

Let~$H \in \consets(G)$ and~$c \in \Ccal$.
We define~$e(H,c) \in \B^{\eta k}$ to be the unit vector such that it single non-null entry is~$e(H,c)_{H,c} = 1$.
Since~$H$ induces a connected subgraph of~$G$, it trivially holds that $e(H,c)$ belongs to~$\polytope(G,C)$.

\begin{proposition}\label{prop:dimension}
 The following hold:
 \begin{enumerate}[(i)]
  \item $\polytope(G, C)$ is full-dimensional, that is, $\dim (\polytope(G, C)) = \eta k$.
  \item For every~$ H \in \consets(G)$ and~$c \in \mathcal{C}$, $x_{H, c} \geq 0$ is facet-defining.
 \end{enumerate}
\end{proposition}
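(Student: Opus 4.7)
The plan is to exhibit an explicit list of points in $\polytope(G,C)$ and check affine independence, using only two kinds of colorings: the all-uncolored one and the ones that color a single connected subgraph $H$ with a single color $c$.

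For (i), I would first observe that the zero vector $\mathbf{0} \in \R^{\eta k}$ belongs to $\polytope(G,C)$, since it corresponds (via the correspondence of Proposition~\ref{prop:correctness}) to the coloring $C^\prime$ that leaves every vertex uncolored; this is vacuously convex because every color class is empty. Next, for each pair $(H,c)$ with $H \in \consets(G)$ and $c \in \Ccal$, the unit vector $e(H,c)$ lies in $\polytope(G,C)$, as already observed just before the statement (it encodes the convex coloring that assigns $c$ to every vertex of $H$ and leaves the rest uncolored). Together this gives $1 + \eta k$ points in $\polytope(G,C)$. Their affine independence is immediate: the $\eta k$ differences $e(H,c) - \mathbf{0} = e(H,c)$ are exactly the standard basis of $\R^{\eta k}$, hence linearly independent. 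So $\dim \polytope(G,C) = \eta k$.

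For (ii), fix $(H,c)$. The inequality $x_{H,c} \geq 0$ is valid because variables are binary by \eqref{ineq:integer}. To show it defines a facet, I would exhibit $\eta k$ affinely independent points of $\polytope(G,C)$ satisfying $x_{H,c} = 0$. Take $\mathbf{0}$, together with $e(H^\prime, c^\prime)$ for every pair $(H^\prime, c^\prime) \neq (H,c)$. All of these belong to $\polytope(G,C)$ by the argument of part (i), and all have their $(H,c)$-coordinate equal to $0$. Counting gives $1 + (\eta k - 1) = \eta k$ points, and affine independence follows exactly as before: the $\eta k - 1$ differences from $\mathbf{0}$ are distinct standard basis vectors of $\R^{\eta k}$, hence linearly independent, so the $\eta k$ points are affinely independent.

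There is no substantive obstacle here; the argument essentially reduces to checking that the all-uncolored coloring and the single-block colorings are all convex, which is trivial, and then reading off the standard-basis structure of the corresponding $0/1$ incidence vectors. The only point worth stressing in the write-up is that part (i) is in fact an immediate corollary of part (ii) plus the presence of $\mathbf{0}$, but it is cleaner to prove (i) first and reuse the same family of points for (ii).
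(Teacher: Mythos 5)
Your proposal is correct and follows essentially the same route as the paper: the zero vector together with the unit vectors $e(H,c)$ establishes full dimension, and dropping $e(H,c)$ from that family gives $\eta k$ affinely independent points on the face $\{x \in \polytope(G,C) \colon x_{H,c}=0\}$. No issues.
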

\begin{proof}
Let~$X$ be the set of vectors~$\{e(H,c) \in \B^{\eta k} \colon H \in \consets(G) \text{ and } c \in \Ccal\}$.
Note that~$|X| = \eta k$, the vectors in~$X$ are linearly independent and that~$X \cup \{\vec 0\}\subseteq \polytope_k(G, C)$.
As a consequence, we have~$\dim(\polytope_k(G, C))= \eta k$.

Consider~$H \in \consets(G)$ and~$c \in \Ccal$. 
The vectors in~$(X \setminus\{e(H,c)\} )\cup \{\vec 0\}$ are affinely independent and belong to the face~$\{x \in \polytope_k(G, C) \colon x_{H,c}=0\}$.
Therefore, $x_{H,c}\geq 0$ induces a facet of~$\polytope_k(G, C)$.
\end{proof}


Actually, we next prove that $-x_{H,c}\leq 0$ are the unique $\leq$-inequalities with a negative coefficient that define facets of~$\polytope(G, C)$. 
Moreover, in the other facet-defining inequalities, the coefficient of $x_{G,c}$ is equal to the righthand side.

\begin{lemma}\label{lemma:facet-negative-coef}
  Let $(\pi,\pi_0)$ be a valid inequality on integer coefficients that is facet-defining for $\polytope(G, C)$.
  The following hold:
  \begin{enumerate}[(i)]
    \item\label{lemma:claim1}  If $\pi\neq -e(H,c)$ for all $H\in \consets(G)$ and $c \in \Ccal$, then $\pi \geq \vec 0$ and $\pi_0\geq 1$. 
    \item\label{lemma:claim2}  If $\pi_0\geq 1$, then $\pi\geq \vec 0$ and $\pi_{H,c}\leq \pi_{G, c}=\pi_0$ for all $H \in \consets(G)$ and $c\in \Ccal$. 
  \end{enumerate}
\end{lemma}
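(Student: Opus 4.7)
The plan is to prove part~(ii) first and then deduce part~(i) from it with one additional step. Both parts rely on one recurring manoeuvre: by Proposition~\ref{prop:dimension}, each coordinate hyperplane $\{x_{H,c}=0\}$ supports a \emph{distinct} facet of $\polytope(G,C)$, so whenever the facet $F=\{x\in\polytope(G,C):\pi x=\pi_0\}$ is shown to lie inside such a hyperplane, a dimension comparison forces $F$ to coincide with that facet, whence $(\pi,\pi_0)$ must be a positive scalar multiple of $(-e(H,c),0)$ and in particular $\pi_0=0$.

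For part~(ii), assume $\pi_0\ge 1$. The bound $\pi_{H,c}\le \pi_0$ is immediate from $e(H,c)\in\polytope(G,C)$. To prove $\pi\ge \vec 0$, suppose $\pi_{H_0,c_0}<0$ and consider any vertex $y$ of $F$ with $y_{H_0,c_0}=1$. Then $y$ is the incidence vector of a convex coloring in which $H_0$ is the color class of $c_0$; uncoloring every vertex of $H_0$ produces another convex coloring whose incidence vector is $y-e(H_0,c_0)\in\polytope(G,C)$, and then $\pi(y-e(H_0,c_0))=\pi_0-\pi_{H_0,c_0}>\pi_0$ contradicts validity. Hence no vertex of $F$ activates $x_{H_0,c_0}$, so $F\subseteq\{x_{H_0,c_0}=0\}$, which by the manoeuvre above forces $\pi_0=0$, a contradiction. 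The equality $\pi_{G,c}=\pi_0$ follows the same template: if $\pi_{G,c^*}<\pi_0$, then any vertex $y\in\polytope(G,C)$ with $y_{G,c^*}=1$ must actually equal $e(G,c^*)$, since once the whole vertex set is monochromatic constraints~\eqref{ineq:vertices}--\eqref{ineq:colors} pin every other variable to zero; so $\pi y<\pi_0$, $y\notin F$, hence $F\subseteq\{x_{G,c^*}=0\}$ and the same contradiction arises.

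For part~(i), $\vec 0\in\polytope(G,C)$ yields $\pi_0\ge 0$. Suppose $\pi_0=0$. Then $e(H,c)\in\polytope(G,C)$ forces $\pi\le \vec 0$; if two distinct entries $\pi_{H_1,c_1}$ and $\pi_{H_2,c_2}$ were strictly negative, $F$ would lie in the intersection of two distinct facet-defining hyperplanes and thus have codimension at least two, contradicting that $F$ is a facet. Hence exactly one coefficient is nonzero and negative, and taking the primitive integer representative of the facet-defining inequality this coefficient is $-1$, so $\pi=-e(H_0,c_0)$, contradicting the hypothesis. Therefore $\pi_0\ge 1$, and part~(ii) then provides $\pi\ge \vec 0$.

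The most delicate step I foresee is the identity $\pi_{G,c}=\pi_0$, which rests on the somewhat special feature that $e(G,c)$ is the \emph{only} 0/1 point of $\polytope(G,C)$ activating coordinate $(G,c)$: one has to read constraints~\eqref{ineq:vertices}--\eqref{ineq:colors} carefully for $H=G$ to confirm that no other pair can be simultaneously active. The normalization issue in part~(i) — reducing a negative integer multiple of $e(H_0,c_0)$ to $-e(H_0,c_0)$ itself — is the other point to handle explicitly, via the standing convention that facet-defining integer inequalities are identified with their primitive representatives.
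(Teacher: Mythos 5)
Your proof is correct, and while it leans on the same two basic moves as the paper's (the point $e(H,c)$ bounds $\pi_{H,c}$ from above by $\pi_0$, and deleting an active coordinate from a feasible $0/1$ point yields another feasible point), it is organized differently and replaces one argument outright. The paper proves (i) first by noting that $F\neq F_{H,c}$ supplies a point $\bar x\in F$ with $\bar x_{H,c}=1$, whence $\pi_{H,c}\geq 0$ coordinate by coordinate, and then invokes (i) inside the proof of (ii); you instead prove (ii) first, in contrapositive form ($\pi_{H_0,c_0}<0$ forces $F\subseteq F_{H_0,c_0}$, hence $F=F_{H_0,c_0}$ and $\pi_0=0$), and derive (i) from it. More substantively, for the identity $\pi_{G,c}=\pi_0$ the paper builds a lifted inequality $\pi'$ with $\pi'_{G,c}=\pi_0$, verifies its validity, and appeals to non-domination of facet-defining inequalities, whereas you argue directly that $e(G,c)$ is the only integer point of $\polytope(G,C)$ activating coordinate $(G,c)$, so $\pi_{G,c}<\pi_0$ would again trap $F$ inside $F_{G,c}$; your version is more elementary and avoids the validity check for $\pi'$. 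Finally, your explicit treatment of the normalization in (i) — noting that $\pi_0=0$ only pins $\pi$ down to a negative integer multiple of some $e(H_0,c_0)$, and that the hypothesis $\pi\neq -e(H,c)$ must be read modulo the primitive-representative convention (equivalently, $F\neq F_{H,c}$, which is how the paper's proof actually uses it) — is a point the paper's statement glosses over, and it is good that you flagged it.
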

\begin{proof}
 Let~$F$ be the facet of~$\polytope(G, C)$ induced by inequality~$(\pi, \pi_0)$, that is, $F=\{ x \in \polytope(G, C) \colon \pi x = \pi_0\}$.
 To prove~\eqref{lemma:claim1}, suppose that~$F$ is different from~$F_{H,c} = \{x \in \polytope(G, C) \colon x_{H,c}=0\}$, and thus there is a vector~$\bar x \in F$ such that~$\bar x_{H,c} = 1$.
 Let~$\hat x$ be the vector equals~$\bar x$ except for~$\bar x_{H,c}=0$.
 Note that~$\hat x$ also belongs to~$\polytope(G, C)$.
 Thus, we have~$\pi \hat x \leq \pi_0 = \pi \bar x$ which implies~$\pi_{H,c} \geq 0$.
 Therefore, we conclude that~$\pi \geq \vec 0$ and~$\pi_0\geq 0$.
 
 Suppose to the contrary that~$\pi_0=0$.
 Since~$\pi \neq \vec 0$, there exists~$H \in \consets(G)$ and~$c \in \Ccal$ such that~$\pi_{H,c} > 0$.
 Thus, every vector~$x$ in~$F$ satisfies~$x_{H,c} = 0$, that is, $F \subset F_{H,c}$, a contradiction.
 Consequently, it follows that~$\pi_0\geq 1$.
 This completes the proof of Claim~\eqref{lemma:claim1}.
 
 To prove Claim~\eqref{lemma:claim2}, suppose~$\pi_0 \geq 1$.
 Let~$c \in \Ccal$ and let~$\pi^\prime$ be equal to~$\pi$ except for~$\pi^\prime_{G, c} = \pi_0$.
 Consider an integer point~$x \in~\polytope(G, C)$.
 If $x_{G, c}=0$, then $\pi^\prime x= \pi x \leq \pi_0$. 
 Otherwise, $x_{G,c}=1$ and, consequently, all other entries of~$x$ are null since~$V(G)$ clearly intersects all subgraphs.
 Thus, we obtain~$\pi^\prime x = \pi_0$.
 Therefore, $(\pi^\prime, \pi_0)$ is a valid inequality for~$\polytope(G, C)$.
 
 Since~$(\pi^\prime,\pi_0)$ cannot dominate the facet-defining inequality~$(\pi,\pi_0)$, it must hold that $\pi^\prime_{V(G), c} \geq \pi_0$.
 On the other hand, as $e(H,c) \in \polytope(G, C)$, we have $\pi_{H, c} \leq \pi_0$ for all~$H \in \consets(G)$ and~$c \in \Ccal$.
 By Claim~\eqref{lemma:claim1} of this lemma, it follows that $\pi\geq \vec 0$. 
\end{proof}

\begin{corollary}\label{cor:equivalences}
Let~$(\pi,\pi_0)$ be a facet-defining inequality of~$\polytope(G,C)$ such that~$\pi \in \Z^{\eta k}$. 
The following statements are equivalent: 
\begin{enumerate}[(i)]
	\item $\pi_0=1$.
	\item $\pi_{V(G),c}=1$ for all  $c \in \Ccal$.
	\item $\pi\in \B^{\eta k}$ with $\pi\neq \vec 0$.
\end{enumerate}
\end{corollary}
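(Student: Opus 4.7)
The plan is to establish the cyclic chain $(i) \Rightarrow (ii) \Rightarrow (iii) \Rightarrow (i)$, with each step being a direct consequence of Lemma~\ref{lemma:facet-negative-coef}. The single fact doing the work is that $V(G) \in \consets(G)$ (because $G$ is connected by hypothesis), so the coefficient $\pi_{V(G),c}$ is well defined, and whenever $\pi_0 \geq 1$, Claim~\eqref{lemma:claim2} supplies the crucial equality $\pi_{V(G),c} = \pi_0$.

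For $(i) \Rightarrow (ii)$, the assumption $\pi_0 = 1$ triggers Claim~\eqref{lemma:claim2}, which immediately yields $\pi_{V(G),c} = \pi_0 = 1$ for every $c \in \Ccal$. For $(ii) \Rightarrow (iii)$, I first observe that $\pi_{V(G),c} = 1 > 0$ excludes the degenerate possibility $\pi = -e(H,c)$ (whose entries are all nonpositive), so Claim~\eqref{lemma:claim1} gives $\pi \geq \vec 0$ and $\pi_0 \geq 1$, and Claim~\eqref{lemma:claim2} then provides $\pi_{H,c} \leq \pi_{V(G),c} = \pi_0$ for all $H$ and $c$. The hypothesis $\pi_{V(G),c} = 1$ forces $\pi_0 = 1$, so combining with integrality and nonnegativity of $\pi$ we conclude $\pi \in \B^{\eta k}$, and $\pi \neq \vec 0$ follows from $\pi_{V(G),c} = 1$.

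For $(iii) \Rightarrow (i)$, nonnegativity of $\pi$ together with $\pi \neq \vec 0$ once again rules out $\pi = -e(H,c)$, so Claim~\eqref{lemma:claim1} yields $\pi_0 \geq 1$, and Claim~\eqref{lemma:claim2} gives $\pi_0 = \pi_{V(G),c} \leq 1$ (the upper bound coming from the binary constraint on $\pi$), whence $\pi_0 = 1$. I do not expect any real obstacle here: the corollary is essentially a repackaging of the equality $\pi_{V(G),c} = \pi_0$ supplied by Lemma~\ref{lemma:facet-negative-coef}, specialized to the case of binary coefficients.
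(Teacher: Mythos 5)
Your proof is correct and follows exactly the route the paper intends: the corollary is stated without proof as an immediate consequence of Lemma~\ref{lemma:facet-negative-coef}, and your cyclic chain $(i)\Rightarrow(ii)\Rightarrow(iii)\Rightarrow(i)$ correctly combines Claims~\eqref{lemma:claim1} and~\eqref{lemma:claim2} of that lemma, including the needed observation that positivity of some entry of $\pi$ rules out $\pi=-e(H,c)$ before invoking Claim~\eqref{lemma:claim1}.
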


\section{Facets with binary coefficients} \label{sec:bin-ineq}

We present a class of valid inequalities with righthand side one and show that it comprises all facet-defining inequalities with binary coefficients when the input graph is a tree.

For $H \in \consets(G)$, let us define $\consets_{\cap}(H)=\{H^\prime \in \consets(G) \colon  H^\prime \cap H \neq \emptyset \}$ and  $\consets_{\supset}(H)=\{H^\prime \in \consets(G) \colon H^\prime \supseteq H \}$
as the collections of  connected subgraphs of~$G$ that  intersect and contain~$H$, respectively.
It is clear that~$\consets_{\supset} (H) \subseteq \consets_{\cap} (H)$.

\begin{theorem}\label{theorem:facet1}
For every $H \in \consets(G)$ and $c \in \Ccal$, the inequality 
\begin{equation} \label{eq:ineq1}
   \sum_{H^\prime \in \consets_{\supset}(H)} \sum_{c^\prime \in \Ccal \setminus \{c\}} x_{H^\prime, c^\prime} + \sum_{H^\prime \in \consets_{\cap}(H)} x_{H^\prime, c} \leq 1
\end{equation} 
is valid for $\polytope(G, C)$.
Moreover, it is facet-defining if $|\Ccal| \geq 2$.
\end{theorem}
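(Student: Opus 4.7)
The plan is to establish validity by a short case analysis on which terms in the left-hand side of~\eqref{eq:ineq1} can simultaneously equal~$1$ in an integer solution, using Proposition~\ref{prop:correctness} to translate integer points into convex colorings; and to prove facetness by the standard argument that every linear equation satisfied on the induced face is a scalar multiple of~\eqref{eq:ineq1}, leveraging that $\polytope(G,C)$ is full-dimensional (Proposition~\ref{prop:dimension}).

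For validity, let $x$ be an integer point in $\polytope(G,C)$, corresponding by Proposition~\ref{prop:correctness} to a convex coloring $C'$. I would call a pair $(H',c')$ \emph{tight} if it appears in~\eqref{eq:ineq1}, i.e.\ either $c' \neq c$ and $H' \supseteq H$, or $c' = c$ and $H' \cap H \neq \emptyset$. Suppose two distinct tight pairs $(H_1,c_1),(H_2,c_2)$ satisfied $x_{H_i,c_i} = 1$. If both use colors in $\Ccal \setminus \{c\}$, then $H_1, H_2 \supseteq H$, so either $c_1=c_2$ (violating~\eqref{ineq:colors}) or any $v \in H$ would receive two distinct colors (violating~\eqref{ineq:vertices}). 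If both use color $c$, one directly violates~\eqref{ineq:colors}. If one uses a color $c_1 \neq c$ with $H_1 \supseteq H$ and the other uses color $c$ with $H_2 \cap H \neq \emptyset$, then any $v \in H_2 \cap H \subseteq H_1$ gets both $c$ and $c_1$, contradicting~\eqref{ineq:vertices}.

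For facetness, denote by $F$ the face induced by~\eqref{eq:ineq1}, and let $(\alpha,\alpha_0)$ be any equation satisfied on $F$. Each tight unit vector $e(H',c')$ lies in $F$, forcing $\alpha_{H',c'} = \alpha_0$. The core step is to exhibit, for every \emph{non-tight} pair $(H',c')$, a two-support point $e(H',c') + e(\tilde H,\tilde c) \in F$ with $(\tilde H,\tilde c)$ tight; this forces $\alpha_{H',c'} = 0$. Combined with the tight identities, this yields $(\alpha,\alpha_0) = \alpha_0(\pi,1)$, where $\pi$ denotes the coefficient vector of~\eqref{eq:ineq1}, and hence~\eqref{eq:ineq1} is facet-defining.

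The main obstacle is the construction of these auxiliary points and checking their feasibility in $\polytope(G,C)$. When $c' = c$, non-tightness means $H' \cap H = \emptyset$, and I would pair $e(H',c)$ with $e(H,\tilde c)$ for any $\tilde c \in \Ccal \setminus \{c\}$, which exists thanks to $|\Ccal|\geq 2$; the pair $(H,\tilde c)$ is tight and the two supports are disjoint. When $c' \neq c$, non-tightness means $H \not\subseteq H'$, so I would pick $v \in H \setminus H'$ and pair $e(H',c')$ with $e(\{v\},c)$; the singleton $\{v\}$ is disjoint from $H'$ (since $v \notin H'$) and meets $H$, making $(\{v\},c)$ tight. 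In both cases the two supports are disjoint and the two colors differ, so~\eqref{ineq:vertices}--\eqref{ineq:integer} are satisfied and the sum lies in $F$, closing the argument.
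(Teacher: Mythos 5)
Your proposal is correct and follows essentially the same route as the paper: validity by observing that no two terms of the left-hand side can simultaneously equal one in an integer point (the paper phrases this as bounding the two partial sums and showing they exclude each other), and facetness by the standard indirect argument, pairing each non-tight variable $e(H^\prime,c^\prime)$ with a tight unit vector on a disjoint support. Your only (harmless) deviations are cosmetic — e.g., using a singleton $\{v\}\subseteq H\setminus H^\prime$ where the paper takes a general connected subgraph of $H-H^\prime$ — so nothing further is needed.
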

\begin{proof}
Let~$x$ be an extreme point of~$\polytope(G, C)$.
Let $A = \sum_{H^\prime \in \consets_{\supset}(H)} \sum_{c^\prime \in \Ccal \setminus \{c\}} x_{H^\prime, c^\prime}$ and let $B=\sum_{H^\prime \in \consets_{\cap}(H)} x_{H^\prime, c}$.
Because of inequalities~\eqref{ineq:vertices}, it holds that~$A\leq 1$.
Inequality~\eqref{ineq:colors} for color~$c$ guarantees that~$B\leq1$.
Observe that, if~$B=1$, then there exists a vertex~$v$ which is assigned color~$c$.
Moreover, $v$ is a vertex of every set~$H^\prime \in \consets_\supset (H)$.
Thus, we have~$A=0$ due to constraints~\eqref{ineq:vertices} of the formulation.
This proves that inequality~\eqref{eq:ineq1} is valid for~$\polytope(G, C)$.

Let $\pi x\leq \pi_0$ be inequality~\eqref{eq:ineq1} and~$F_{H, c} = \{ x \in \polytope(G,C) \colon \pi x = \pi_0\}$ be the corresponding face. 
We shall prove that $F_{H, c}$  is a facet by showing that,  for every valid inequality~$(\lambda, \lambda_0)$ of~$\polytope(G,C)$, we have~$(\lambda, \lambda_0) = \lambda_0(\pi,\pi_0)$  if~$F_{H, c} \subseteq \{x \in \polytope(G,C) \colon \lambda x = \lambda_0 \}$. 
Let~$H^\prime \in \consets(G)$ and let~$c^\prime$ be a color in~$\Ccal \setminus\{c\}$, which exists because $|\Ccal| \geq 2$.
          
First, we show the coefficients associated with~$c$.
If $H^\prime \in \consets_{\cap}(H)$, then $e(H^\prime, c) \in F_{H, c}$ implies~$\lambda_{H^\prime, c} = \lambda_0$.
If $H^\prime \notin \consets_{\cap}(H)$,  then the points~$e(H, c^\prime)$ and~$e(H, c^\prime)+e(H^\prime, c)$ both belong to~$F_{H, c}$.
Thus, we have~$\lambda (e(H, c^\prime)+e(H^\prime, c))=\lambda e(H, c^\prime)$, that is, $\lambda_{H^\prime, c}=0$. 

If $H^\prime \in \consets_{\supset}(H)$, then $e(H, c)\in F_{H,c}$ and $e(H^\prime, c^\prime) \in F_{H, c}$ show that~$\lambda_{H,c}=\lambda_{H^\prime, c^\prime}$. 
If~$H^\prime \notin \consets_{\supset}(H)$, then there exists a connected graph $\hat H$ which is subgraph of $H - H^\prime$. 
Observe that $\hat H \in \consets_{\cap}(H)$. 
Therefore, $e(\hat H, c)$ and $e(\hat H, c)+e(H^\prime, c^\prime)$, both in~$F_{H,c}$, lead to $\lambda_{H^\prime, c^\prime}=0$.
\end{proof}

It is worth noting that inequalities~\eqref{eq:ineq1} dominate and generalize the formulation constraints. 
Indeed, inequalities~\eqref{ineq:colors} are dominated by \eqref{eq:ineq1} for $H=G$ whereas~\eqref{ineq:vertices} are equivalent to~\eqref{eq:ineq1} for~$H$ such that~$|V(H)|=1$.

We next show that the class of inequalities~\eqref{eq:ineq1} comprises all facet-defining inequalities on integer coefficients with righthand side one if~$G$ is a tree.

\begin{theorem}\label{thm:facet}
Let $k\geq 2$ and let~$\pi\in \Z^{\eta k}$.
If~$(\pi, 1)$ is a facet-defining inequality for~$\polytope(G,C)$ and~$G$ is a tree, then~$(\pi, 1)$ is \eqref{eq:ineq1} for some~$H^* \in \consets(G)$ and~$c^* \in \Ccal$.
\end{theorem}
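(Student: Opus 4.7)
The plan is to extract $H^*$ and $c^*$ from $\pi$ via tightening arguments that critically exploit tree-specific structure, namely the Helly property of subtrees and the uniqueness of paths. By Corollary~\ref{cor:equivalences}, $\pi \in \B^{\eta k}$ and $\pi_{V(G),c}=1$ for every $c \in \Ccal$. Define $\mathcal{H}_c := \{H \in \consets(G) : \pi_{H,c} = 1\}$ for each $c$; each such family is nonempty since $V(G) \in \mathcal{H}_c$.

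The first step establishes a cross-color intersection property: for distinct $c_1,c_2 \in \Ccal$ and any $H_1 \in \mathcal{H}_{c_1}$, $H_2 \in \mathcal{H}_{c_2}$, we must have $H_1 \cap H_2 \neq \emptyset$. Indeed, if they were disjoint, the partial coloring assigning $c_i$ to $H_i$ would extend to a convex total coloring of the tree $G$ (each component of $G - (H_1 \cup H_2)$, by the tree property, is adjacent to exactly one vertex of $H_1 \cup H_2$ and can inherit the corresponding color), so $e(H_1,c_1) + e(H_2,c_2) \in \polytope(G,C)$ would violate $\pi x \leq 1$. The second step is upward closure of each $\mathcal{H}_c$: if $H \in \mathcal{H}_c$ and $H \subseteq H' \in \consets(G)$, then $H' \in \mathcal{H}_c$. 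If not, I would show that $\pi + e(H',c)$ still yields a valid inequality, strictly dominating $(\pi,1)$ and contradicting facetness. Validity reduces to convex colorings $C'$ with $G[C'^{-1}(c)] = H'$; for any $c' \neq c$, cross-intersection forces every $H'' \in \mathcal{H}_{c'}$ to meet $H \subseteq H'$, yet $G[C'^{-1}(c')]$ must be disjoint from $H'$, so it cannot belong to $\mathcal{H}_{c'}$, giving $\pi \chi(C') = 0$.

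With each $\mathcal{H}_c$ upward-closed, it is determined by its set $\mathcal{M}_c$ of minimal elements. Further tightening arguments then establish: (a) for each color $c$, either $|\mathcal{M}_c|=1$ or every element of $\mathcal{M}_c$ is a singleton; (b) at most one color, call it $c^*$, satisfies $|\mathcal{M}_{c^*}|\geq 2$; and (c) for each $c \neq c^*$, the unique minimal element of $\mathcal{M}_c$ is a common subtree $H^* \in \consets(G)$. For (a), given distinct $H_1, H_2 \in \mathcal{M}_c$ with $|H_1| \geq 2$, pick a leaf $v$ of $H_1$ with $v \notin H_2$ (exists, since otherwise the subtree structure of $H_1$ would force $H_1 \subseteq H_2$, contradicting minimality); then any $H'' \in \mathcal{H}_{c'}$ with $c'\neq c$ must intersect $H_1 \setminus \{v\}$---by subtree Helly applied to $\{H'', H_1, H_2\}$ when $H_1 \cap H_2 \neq \emptyset$, and by a bridge argument (any subtree joining disjoint subtrees $H_1$ and $H_2$ in a tree contains the unique $H_1$-$H_2$ path, hence the bridge vertex of $H_1$, which can be chosen distinct from $v$ because $H_1$ has at least two leaves) when $H_1 \cap H_2 = \emptyset$. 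Thus $\pi + e(H_1 - v, c)$ is a valid tightening, contradicting facetness. For (c), if $H^*_{c_1} \neq H^*_{c_2}$ for some $c_1, c_2 \neq c^*$, subtree Helly applied to $\{H'', H^*_{c_1}, H^*_{c_2}\}$ makes $\pi + e(H^*_{c_1} \cap H^*_{c_2}, c_1)$ a valid tightening. Finally, cross-intersection shows that the singleton minimal elements of $\mathcal{M}_{c^*}$ are exactly $\{\{v\} : v \in H^*\}$. Assembling these pieces yields that $(\pi,1)$ coincides with inequality~\eqref{eq:ineq1} for this $H^*$ and $c^*$.

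The main obstacle lies in the dichotomy and uniqueness steps of the last stage, which demand careful case analysis (disjoint versus intersecting minimal elements, existence of appropriate leaf vertices) and critically invoke subtree Helly and the uniqueness of tree paths. The analogous structural claims genuinely fail for general graphs, which is precisely why the theorem is restricted to trees.
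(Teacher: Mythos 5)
Your overall strategy is a legitimate reorganization of the argument the paper itself uses: reduce to $\pi\in\B^{\eta k}$ via Corollary~\ref{cor:equivalences}, then repeatedly lift a zero coefficient to one, check validity of the lifted inequality, and contradict facetness. Your cross-intersection property, the upward closure of each $\mathcal{H}_c$, and the dichotomy (a) (Helly for subtrees when the two minimal elements meet, the unique connecting path when they are disjoint) all check out; (b), which you assert without proof, does follow in one line from (a) plus cross-intersection (two distinct singleton minimal elements for each of two colors would have to pairwise coincide). The paper organizes the same material around whether $\bigcap_{H\in\consets_\pi(c)}H$ is empty, which makes $H^*$ appear directly as that intersection; your minimal-element bookkeeping is an equivalent route to the same place.

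The genuine gap is the final identification. Cross-intersection only yields $\mathcal{M}_{c^*}\subseteq\{\{v\}\colon v\in H^*\}$, i.e.\ $\mathcal{H}_{c^*}\subseteq\consets_\cap(H^*)$; it cannot give ``exactly''. The reverse containment --- that \emph{every} $v\in H^*$ has $\{v\}\in\mathcal{H}_{c^*}$, hence $\consets_\cap(H^*)\subseteq\mathcal{H}_{c^*}$ --- is a further lifting step, and it is precisely where the paper spends another tightening argument (validity of $\pi+e(\{v\},c^*)$, using that a point with $x_{\{v\},c^*}=1$ excludes every $H'\supseteq H^*\ni v$ for the other colors). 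Relatedly, your case analysis leaves $c^*$ undefined when no color has two minimal elements; in that case all $\mathcal{H}_c$ could a priori equal $\consets_\supset(H^*)$ with $|H^*|\ge 2$, which is not of the form~\eqref{eq:ineq1}, and nothing you wrote rules this out. Both holes close the same way: once you have $\mathcal{H}_{c^*}\subseteq\consets_\cap(H^*)$ and $\mathcal{H}_c\subseteq\consets_\supset(H^*)$ for $c\ne c^*$, the vector $\pi$ is componentwise at most the coefficient vector of~\eqref{eq:ineq1} for $(H^*,c^*)$, which is valid with the same righthand side by Theorem~\ref{theorem:facet1}; facetness of $(\pi,1)$ and full-dimensionality then force equality. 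Either add that domination step or the paper's per-vertex lifting; as written, ``cross-intersection shows \dots exactly'' is not a proof of the inclusion you need.
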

\begin{proof}
First note that~$\pi\in \B^{\eta k}$ because of Corollary~\ref{cor:equivalences}. 
Thus, the inequality has the form~$\sum_{c \in \Ccal}  \sum_{H \in \consets_\pi(c)} x_{H,c} \leq 1$, where $\consets_\pi(c)=\{ H \in \consets(G) \colon \pi_{H,c}=1\}$. 
Therefore, it suffices to prove that there exist $H^*\in \consets(G)$ and $c^*\in \Ccal$ such that
\begin{enumerate}[(i)]
	\item $\consets_\pi(c^*) = \consets_{\cap}(H^*)$, and \label{thm:property1}
	\item $\consets_\pi(c) = \consets_{\supset} (H^*)$ for all $c \in \Ccal\setminus\{c^*\}$. \label{thm:property2}
\end{enumerate}

\begin{claim}
  Let $c, c^\prime \in \Ccal$ such that $\bigcap_{H \in \consets_\pi(c)} H \neq \emptyset$ and $\bigcap_{H \in \consets_\pi(c^\prime)} H \neq \emptyset$. It holds that~$\consets_\pi(c)=\consets_\pi(c^\prime)$.
\end{claim}
\label{claim:1}
\begin{proof}
  Since~$c$ and~$c^\prime$ play indistinct roles, it is enough to prove that $\consets_\pi(c)\subseteq \consets_\pi(c^\prime)$ for $c\neq c^\prime$. 
  Suppose to the contrary that there exists~$H \in \consets_\pi(c)\setminus \consets_\pi(c^\prime)$, that is, $\pi_{H,c}=1$ and~$\pi_{H, c^\prime}=0$. 
  Let~$\pi^\prime$ differ from~$\pi$ only by $\pi^\prime_{H,c^\prime}=1$. 
  We next show that $(\pi^\prime,1)$ is valid for~$\polytope_k(G,C)$. 
  
  Let $x$ be an extreme point of $\polytope(G,C)$.
If~$x_{H,c^\prime}=0$, then $\pi' x=\pi x\leq 1$. 
Suppose now that~$x_{H,c^\prime}=1$ and so~$x_{H,c}=0$. 
Let~$x^\prime$ be identical to~$x$ except for $x^\prime_{H,c^\prime}=0$ and $x^\prime_{H,c}=1$.
For every~$H^\prime \neq H$ such that~$x_{H^\prime, c}=1$, we also set~$x^\prime_{H^\prime, c}=0$.
Note that such~$H^\prime$ satisfies~$H^\prime \notin \consets_{\pi}(c)$, that is, $\pi_{H^\prime, c}=0$, otherwise, by the hypothesis, we have~$H\cap H^\prime \neq \emptyset$, a contradiction to the fact that~$x$ is a feasible point with~$x_{H, c^\prime} = x_{H^\prime, c}=1$.
Observe now that $x^\prime \in \polytope_k(G, C)$ and so~$\pi^\prime x=\pi x^\prime\leq 1$. 
Thus, $(\pi^\prime, 1)$ is valid and dominates~$(\pi, 1)$, a contradiction. 
Therefore, $\consets_{\pi}(c)\subseteq \consets_{\pi}(c^\prime)$ and so the claim holds. 
\end{proof}

\begin{claim}
  Let $c \in \Ccal$ such that  $\bigcap_{H \in \consets_{\pi}(c)} H = \emptyset$. 
For every~$c^\prime \in \Ccal \setminus \{c\}$,  we have~$\bigcap_{ H \in \consets_{\pi}(c^\prime)} H \neq \emptyset$.
\end{claim}
\label{claim:2}
\begin{proof}
By Lemma~\ref{lemma:facet-negative-coef}, $\consets_{\pi}(c)$ is nonempty and so there must exist~$H_1$ and~$H_2$ in~$\consets_{\pi}(c)$ such that $H_1\cap H_2=\emptyset$.
Let~$v_1$ and~$v_2$ be the two closest vertices in~$G$ such that~$v_1 \in H_1$ and~$v_2 \in H_2$.
Since~$G$ is a tree, every path from~$H_1$ to~$H_2$ must contain both~$v_1$ and~$v_2$
Let $c^\prime \in \Ccal \setminus \{c\}$ and let~$H^\prime \in \consets_{\pi}(c^\prime)$ (note that $\consets_{\pi}(c^\prime)\neq \emptyset$ by Lemma~\ref{lemma:facet-negative-coef}).
We next prove that~$H^\prime \in \consets_\cap(H_1)\cap \consets_\cap(H_2)$.
As a consequence, we have~$H^\prime \supseteq \{v_1, v_2\}$ which  implies~$\bigcap_{H\in \consets_{\pi}(c^\prime)} H \neq \emptyset$.

Suppose to the contrary that~$ H^\prime \notin \consets_\cap(H_1)$. 
Thus, $x=e(H_1, c)+e(H^\prime, c^\prime) \in \polytope_k(G,C)$ and~$\pi x=2$. 
This contradicts the validity of~$(\pi, 1)$. 
Therefore, $H^\prime$ belongs to~$\consets_\cap(H_1)$. 
Similarly, we prove that~$H^\prime$ belongs~$\consets_\cap(H_2)$. 
\end{proof}

Since $k \geq 2$, by Claim~6.2
, there is~$c^*\in \Ccal$ such that $\bigcap_{H \in \consets_{\pi}(c)} H \neq \emptyset$ for all~$c \in \Ccal\setminus \{c^*\}$. 
By Claim~6.1
, $\consets_{\pi}(c)=\consets_{\pi}(c^\prime)$ for all~$c,c^\prime \in \Ccal\setminus\{c^*\}$. 
Therefore, there is a connected subgraph~$H^*$ (i.e. a subtree) such that
  \begin{equation}\label{thm:equation}
    \bigcap_{H \in \consets_{\pi}(c)} H=H^*, \forall c \in \Ccal \setminus \{c^*\}
  \end{equation}
We shall prove that~$H^*$ and~$c^*$, as defined above, satisfy  properties~\eqref{thm:property1} and~\eqref{thm:property2}.
  
\begin{claim}
  Let $c\in \Ccal$ and $H \in \consets_{\pi}(c)$. 
  It holds that~$\consets_{\supset}(H) \subseteq \consets_{\pi}(c)$.
\end{claim}
\label{claim:3}
\begin{proof}
Suppose to the contrary that there is~$H^\prime \in \consets_{\supset}(H) \setminus \consets_{\pi}(c)$.
Of course, it holds that~$H^\prime \neq H$.
Let us define~$\pi^\prime$ from~$\pi$ by changing $\pi^\prime_{H^\prime, c}$ from~0 to~1.
We now show that~$\pi^\prime x\leq 1$ is valid for~$\polytope_k(G,C)$. 

Let~$x$ be an integer point of $\polytope_k(G,C)$.
If $x_{H^\prime, c}=0$, then $\pi^\prime x=\pi x\leq 1$.
Otherwise, $x_{H^\prime, c}=1$ and, consequently, $x_{H,c}=0$.
We define~$x^\prime$ from~$x$ by setting~$x^\prime_{H^\prime, c}=0$ and~$x^\prime_{H,c}=1$. 
Observe that~$x^\prime$ belongs to~$\polytope_k(G,C)$ since $H^\prime \supseteq H$.
Moreover, we have~$\pi^\prime x=\pi x^\prime\leq 1$ since $H \in \consets_{\pi}(c)$.
Hence, $(\pi^\prime,1)$ is valid for~$\polytope_k(G,C)$ and it dominates~$(\pi,1)$, a contradiction. 
\end{proof}

We now prove property~\eqref{thm:property1}.
Suppose to the contrary that there is $H \in \consets_{\pi}(c^*)\setminus \consets_{\cap}(H^*)$.
Consider a color~$ c \neq c^*$. 
By~\eqref{thm:equation}, there must exist $H^\prime \in \consets_{\pi}(c)$ such that $H^\prime \notin \consets_{\cap}(H)$.
Otherwise, if every subtree in~$\consets_{\pi}(c)$ intersected~$H$, then~$H$ would intersect~$H^*$. 
Thus, $x=e(H^\prime, c)+e(H, c^*) \in \polytope_k(G,C)$ and $\pi x=2$, which contradicts the valid of~$(\pi,1)$. 
Therefore, $\consets_{\pi}(c^*)\subseteq \consets_{\cap}(H^*)$.
  
To prove that $\consets_{\cap}(H^*) \subseteq \consets_{\pi}(c^*)$,  note that~$\consets_{\cap} (H^*) \subseteq \bigcup_{v \in H^* } \consets_{\supset} (v)$.
By Claim~6.3
, it suffices to prove that the trivial graphs in $\consets_{\cap}(H^*)$  (i.e. all vertices in~$H^*$) belong to $\consets_{\pi}(c^*)$.
Consider a vertex $v \in H^*$.
Suppose to the contrary that $v \notin \consets_{\pi}(c^*)$. 
We define~$\pi^\prime$ from~$\pi$ by setting~$\pi^\prime_{v,c^*}$ to one.
We now show that~$(\pi^\prime, 1)$ is valid for~$\polytope_k(G,C)$.
Let~$x$ be an integer point of~$\polytope_k(G,C)$.
If~$x_{v, c^*}=0$, then $\pi^\prime x=\pi x\leq 1$. 
Otherwise, $x_{v, c^*}=1$ and so~$x_{H^\prime,  c^*}=0$ for every set~$H^\prime \neq \{v\}$.
Moreover, for every~$c \in \Ccal \setminus \{c^*\}$ and~$H^\prime \in \consets_{\pi}(c)$, it follows from equation~\eqref{thm:equation} that $H^\prime \in \consets_\supset(H^*) \subseteq \consets_\supset(v)$ which implies~$x_{H^\prime, c}=0$.
Hence, $\pi x=\sum_{H^\prime \in \consets_{\pi}(c^*)} x_{H^\prime, c^*} + \sum_{c \in \Ccal\setminus \{c^*\}}\sum_{H^\prime \in \consets_{\pi}(c)}x_{H^\prime, c}= \pi_{v,c^*}=0$ and $\pi^\prime x=\pi^\prime_{v,c^*}=1$. 
Therefore, $(\pi^\prime, 1)$ is valid for~$\polytope_k(G,C)$ and dominates~$(\pi,1)$, a contradiction.
This concludes the proof of~\eqref{thm:property1}.

Consider a color~$c \in \Ccal\setminus \{c^*\}.$
Because of~\eqref{thm:equation}, $H \in \consets_{\pi}(c)$ directly implies~$H\in \consets_{\supset}(H^*)$. 
To prove that~$\consets_{\supset}(H^*) \subseteq \consets_{\pi}(c)$, by Claim~6.3
, it suffices to show that~$H^*\in \consets_{\pi}(c)$. 
Suppose to the contrary that~$\pi_{H^*,c}=0$. 
Let~$\pi^\prime$ be equal to $\pi$ except for~$\pi^\prime_{H^*,c}=1$. 
We next prove that~$(\pi^\prime, 1)$ is valid for~$\polytope_k(G,C)$.
Let~$x$ be an integer point of~$\polytope_k(G,C)$.
If~$x_{H^*,c}=0$, then~$\pi^\prime x=\pi x\leq 1$. 
Otherwise, it holds that~$x_{H^*,c}=1$ and, consequently by item~\eqref{thm:property1}, $x_{H,c^*}=0$ for all~$H \in \consets_{\pi}(c^*)$.
Let~$x^\prime$ be obtained from~$x$ by setting~$x^\prime_{H^*,c}=0$, $x^\prime_{H^*,c^*}=1$ and~$x^\prime_{H, c^*}=0$ for every~$H \neq H^*$. 
Since~$H^*\in \consets_\cap(H^*)=\consets_\pi(c^*)$, we have
\[\pi^\prime x - \pi x^\prime= (\pi^\prime_{H^*,c} x_{H^*,c}- \pi_{H^*,c}x^\prime_{H^*,c}) + (\pi^\prime_{H^*, c^*}x_{H^*, c^*}-\pi_{H^*, c^*} x^\prime_{H^*, c^*})=1-1=0.\]
As~$x^\prime \in \polytope_k(G,C)$, it follows that~$\pi^\prime x=\pi x^\prime \leq 1$. 
Therefore, $(\pi^\prime, 1)$ is valid for~$\polytope_k(G,C)$ and dominates~$(\pi,1)$, a contradiction.
\end{proof}
\section{Generalized inequalities}\label{sec:general-ineq}

Let~$\Ccal^\prime \subseteq \Ccal$ be a nonempty set of colors and let~$H \in \consets(G)$.
We now introduce the inequality
\begin{equation}\label{eq:general}
A(x) + B(x) \leq |\Ccal^\prime|,
\end{equation}
where $\delta_{H^\prime}=|\Ccal^\prime|-|H\setminus H^\prime|$ for all $H^\prime \in \consets(G)$,
\begin{align*}
A(x)=\sum_{c \in \Ccal^\prime}\sum_{H^\prime \in \consets_{\cap}(H)} \!\!\! \max(\delta_{H^\prime},1) x_{H^\prime,c} & \quad \text{ and } \\
B(x)=\sum_{c \in \Ccal \setminus \Ccal^\prime} \sum_{H^\prime \in \consets_{\cap}(H)}\!\!\! \max(\delta_{H^\prime},0) x_{H^\prime,c}. &
\end{align*}

Observe that inequality~\eqref{eq:ineq1} for~$i\in \Ccal$ is exactly~\eqref{eq:general} for~$\Ccal^\prime=\{i\}$, that is, this new class of inequalities generalizes that one from Section~\ref{sec:bin-ineq}. 
Indeed, if $|\Ccal^\prime|=1$, then~$\delta_{H^\prime}=1$ for all $H^\prime \in \consets_\supset(H)$, and~$\delta_{H^\prime}\leq 0$ for all $H^\prime \in \consets_\cap(H) \setminus \consets_\supset(H)$.

\begin{proposition} \label{prop:validade6}
If $|\Ccal^\prime| \leq |H|$ or $|H|=1$, then inequality~\eqref{eq:general} is valid for $\polytope_k(G,C)$.
\end{proposition}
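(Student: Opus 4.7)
The plan is to verify inequality~\eqref{eq:general} at every integer point $x$ of $\polytope_k(G,C)$. By Proposition~\ref{prop:correctness}, such an $x$ is the incidence vector of a convex coloring, i.e., of a collection of pairwise disjoint connected subgraphs carrying distinct colors. Let $H_1,\ldots,H_m$ denote those ``active'' subgraphs (with $x_{H_i,c_i}=1$ for some $c_i$) that intersect~$H$, and set $a_i:=|H_i\cap H|\ge 1$, $h:=|H|$, $k':=|\Ccal^\prime|$. Pairwise disjointness gives $\sum_{i=1}^m a_i\le h$. Partition the indices into $M:=\{i:c_i\in \Ccal^\prime\}$ and $N:=\{i:c_i\in \Ccal\setminus \Ccal^\prime\}$; distinctness of the $c_i$'s yields $|M|\le k'$. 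With $\delta_i:=\delta_{H_i}=k'-h+a_i$, the quantity to bound is
\[A(x)+B(x)=\sum_{i\in M}\max(\delta_i,1)+\sum_{i\in N}\max(\delta_i,0).\]

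In the case $|H|=1$, any two subgraphs meeting $H$ would share its unique vertex, so $m\le 1$. If $m=0$, the sum vanishes; if $m=1$, then $a_1=1$ and $\delta_1=k'$, so the single contribution equals $k'$.

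In the case $k'\le h$, set $u:=h-k'\ge 0$. Since $\max(\delta,1)=1+(\delta-1)^+$ and $\max(\delta,0)=\delta^+$, write
\[A(x)+B(x)=|M|+\Sigma,\qquad \Sigma:=\sum_{i\in M}(a_i-u-1)^+ +\sum_{i\in N}(a_i-u)^+.\]
Put $T_M:=\{i\in M:a_i\ge u+2\}$, $T_N:=\{i\in N:a_i\ge u+1\}$ and $T:=T_M\cup T_N$, so that direct expansion yields $\Sigma=\sum_{i\in T}a_i-u|T|-|T_M|$. If $T=\emptyset$, then $\Sigma=0$ and $A(x)+B(x)=|M|\le k'$, as required. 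Otherwise, using $a_i\ge 1$ for the $m-|T|$ indices outside $T$ bounds $\sum_{i\in T}a_i\le h-(m-|T|)$. Substituting this, together with $h=k'+u$ and the identity $m=|T|+|M\setminus T_M|+|N\setminus T_N|$, a routine simplification produces
\[A(x)+B(x)\le k'+u\bigl(1-|T|\bigr)-|N\setminus T_N|.\]
Since $u\ge 0$, $|T|\ge 1$ and $|N\setminus T_N|\ge 0$, each correction is non-positive, yielding $A(x)+B(x)\le k'=|\Ccal^\prime|$.

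The main technical hurdle is precisely this last arithmetic step. The naive bounds $\alpha_i\le a_i$ and $\sum a_i\le h$ only give $A(x)+B(x)\le h$, which is too loose whenever $k'<h$. To squeeze out the factor $u=h-k'$ one must separate the ``large'' indices in $T$, where the positive part of $\delta_i$ genuinely contributes linearly in $a_i$, from the ``small'' ones, which still consume at least one vertex of $H$ each via $a_i\ge 1$. This refined accounting tightens $\sum_{i\in T}a_i$ by exactly the amount needed to absorb the $u|T|$ discount.
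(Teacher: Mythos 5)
Your proof is correct. The setup (restricting attention to the active subgraphs that meet $H$, which are pairwise disjoint and carry distinct colors), the trivial case $|H|=1$, and the final use of $\sum_i |H\cap H_i|\le |H|$ together with $a_i\ge 1$ all match the paper; the arithmetic identity $\Sigma=\sum_{i\in T}a_i-u|T|-|T_M|$ and the resulting bound $k'+u(1-|T|)-|N\setminus T_N|\le k'$ check out, including the separately handled case $T=\emptyset$. Where you genuinely diverge is in how the colors outside $\Ccal'$ are treated: the paper first proves an exchange lemma, iteratively replacing each color $c\in\Ccal\setminus\Ccal'$ either by deletion (when $\delta_{H_c}\le 0$, so $B$ loses nothing) or by an unused color of $\Ccal'$ (which exists precisely because $|H\setminus H_{c^*}|<|\Ccal'|$ forces some color of $\Ccal'$ to be free), thereby reducing to a canonical point supported only on $\Ccal'$ before doing the counting. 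You instead keep both color classes and absorb the set $N$ directly into a single counting argument via the partition $T=T_M\cup T_N$. Your route is more self-contained and avoids having to justify that the exchange preserves feasibility and the value of $A+B$; the paper's route isolates a reusable structural fact (one may assume only colors of $\Ccal'$ are used) at the cost of an extra iterative argument. Two cosmetic remarks: in your closing commentary the symbol $\alpha_i$ is undefined (you presumably mean the coefficients $\max(\delta_i,1)$ and $\max(\delta_i,0)$), and it is worth stating explicitly that $k'\ge 1$ in the case $m=1$, $|H|=1$, so that $\max(\delta_1,1)=k'$; neither affects correctness.
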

\begin{proof}
Let us define~$f(x)=A(x)+B(x)$ for every~$x \in \polytope_k(G,C)$.
Consider an integer point~$x$ of~$\polytope_k(G,C)$.
Observe that, if there is~$H^\prime \in \consets(G)\setminus \consets_\cap(H)$ such that~$x_{H^\prime, c}=1$ for some~$c \in \Ccal$, then the vector~$x^\prime$ which is equal to~$x$ except for~$x^\prime_{H^\prime,c}=0$ belongs to~$\polytope_k(G,C)$ and satisfies~$f(x^\prime)=f(x)$.
Hence, we assume without loss of generality that every connected subgraph used by~$x$ intersects~$H$, that is, $\{H^\prime \colon x_{H^\prime,c}=1 \text{ for some } c\in \Ccal\} \subseteq \consets_\cap(H)$. 

First, we shall prove that there exists an integer point~$x^\prime \in \polytope_k(G,C)$ such that~$x^\prime$ only assigns colors of~$\Ccal^\prime$ and~$f(x^\prime)=f(x)$.
Let~$K$ be the subset of colors of~$\Ccal\setminus \Ccal^\prime$ that are assigned by~$x$, that is, $K=\{c \in \Ccal\setminus \Ccal^\prime \colon x_{H^\prime,c}=1  \text{ for some } H^\prime \in \consets(G) \}$.
Suppose that~$K\neq \emptyset$, otherwise we are trivially done.
For every~$c \in K$, we denote by~$H_c$ the subgraph in~$\consets_\cap(H)$ that is assigned color~$c$ by~$x$.
Let~$c^* \in \arg \min_{c \in K} |H\setminus H_c|$.
If~$|V(H)\setminus V(H_{c^*} )| \geq |\Ccal^\prime|$, then~$B(x)=0$.
In this case, the vector~$x^\prime$ is obtained from~$x$ by setting~$x^\prime_{H_c, c}=0$ for all~$c \in K$.
Note that~$x^\prime$ is an integer vector in~$\polytope(G,C)$ and satisfies~$f(x^\prime)=A(x^\prime)=A(x)=f(x)$.
If~$|H\setminus H_{c^*} | < |\Ccal^\prime|$, then at least one color in~$\Ccal^\prime$, say $c^\prime$, is not assigned by~$x$.
We now define~$x^\prime$ from~$x$ by removing color~$c^*$ of~$H_{c^*}$ and assigning color~$c^\prime$ to it, that is, $x^\prime= x-e(H_{c^*},c^*)+e(H_{c^*},c^\prime)$.
Since $A(x^\prime)-A(x)=B(x)-B(x^\prime)=|C^\prime|- |H \setminus H_{c^*}|$, we have again $f(x^\prime)=f(x)$. 
Note that, if~$x^\prime$ still uses a color in~$\Ccal\setminus \Ccal^\prime$, then we can repeat the previous process on it, a finite number of times, to get an integer vector in~$\polytope_k(G,C)$ that only assigns colors of~$\Ccal^\prime$.
In both cases, we have that $f(x')=f(x)$ and~$x^\prime$ only assigns colors from~$\Ccal^\prime$ to graphs in~$\consets_\cap(H)$.

Suppose now that~$|\Ccal^\prime|\leq |H|$.
Let~$\consets^\prime \subseteq \consets_\cap(H)$ be the subset of graphs used by~$x^\prime$ and~$\consets^{\prime\prime}=\{H^\prime \in \consets^\prime \colon \delta_{H^\prime} > 1\}$. 
Since each $H^\prime \in \consets^\prime$ is assigned to exactly one color in~$\Ccal^\prime$, we have $f(x^\prime)= A(x^\prime)= |\consets^\prime \setminus \consets^{\prime \prime}| + \sum_{H^\prime \in \consets^{\prime \prime} } \delta_{H^\prime}$.
If $\consets^{\prime\prime}=\emptyset$, then~$f(x^\prime)=|\consets^\prime|\leq |\Ccal^\prime|$, where the inequality comes from the fact that every graph in~$\consets^\prime$ is assigned by~$x^\prime$ to a different color in~$\Ccal^\prime$. 
Suppose now that~$\consets^{\prime\prime}\neq\emptyset$.
Since every graph in~$\consets^\prime$ intersects~$H$, it holds that~$|\consets^\prime \setminus \consets^{\prime\prime}|\leq \sum_{H^\prime \in \consets^\prime\setminus \consets^{\prime\prime} } |H\cap H^\prime|$.
Thus,  we obtain the following sequence of inequalities:
\begin{align*}
f(x^\prime) & = |\consets^\prime \setminus \consets^{\prime\prime}| + \sum_{H^\prime \in \consets^{\prime\prime}}(|\Ccal^\prime| -|H|+|H\cap H^\prime|)\\
 &  \leq \sum_{H^\prime \in \consets^\prime \setminus \consets^{\prime\prime}}| H \cap H^\prime | + \sum_{H^\prime \in \consets^{\prime\prime}}(|\Ccal^\prime| -|H|+|H\cap H^\prime|) \\
& \leq \sum_{H^\prime \in \consets^\prime} |H \cap H^\prime| + (|\Ccal^\prime|-|H|),
\end{align*}
where the last inequality is due to~$\consets^{\prime\prime} \neq\emptyset$ and~$|\Ccal^\prime|\leq |H|$.
Since the graphs in~$\consets^\prime$ are mutually disjoint and each of them intersects~$H$, it holds that~$\sum_{H^\prime \in \consets^\prime}|V(H)\cap V(H^\prime)| \leq |V(H)|$. 
Hence, $f(x^\prime)\leq |\Ccal^\prime|$ and therefore~$f(x)\leq |\Ccal^\prime|$, which proves the validity of~\eqref{eq:general}.

Finally, if $V(H)=\{v\}$, then~\eqref{eq:general} becomes~$\sum_{c \in \Ccal^\prime}\sum_{H^\prime : v\in V(H^\prime)} |\Ccal^\prime| x_{H^\prime,c } \leq |\Ccal^\prime|$. 
This is inequality~\eqref{ineq:vertices} multiplied by $|\Ccal^\prime|$ and so it is valid. 
\end{proof}

\begin{theorem}\label{thm:general-facet}
Let $|\Ccal| \geq 3$.
If~$|\Ccal^\prime|\leq |H|-1$ or $|H|=1$, then inequality~\eqref{eq:general} is valid and defines a facet of~$\polytope(G,C)$.
\end{theorem}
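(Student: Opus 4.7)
The plan is to first invoke Proposition \ref{prop:validade6}, which gives validity since $|\Ccal'|\le|H|-1<|H|$ in the main case and $|H|=1$ in the other. For facetness, the sub-cases $|H|=1$ and $|\Ccal'|=1$ reduce to Theorem \ref{theorem:facet1}: when $|H|=1$, inequality \eqref{eq:general} is $|\Ccal'|$ times the vertex constraint \eqref{ineq:vertices}, which coincides with \eqref{eq:ineq1} for $|V(H)|=1$; when $|\Ccal'|=1$, \eqref{eq:general} is literally \eqref{eq:ineq1}. So the substantial case is $|\Ccal'|\ge 2$ and $|H|\ge|\Ccal'|+1\ge 3$. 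I would fix a valid inequality $\lambda x\le\lambda_0$ whose tight face contains the face induced by \eqref{eq:general}, and prove $(\lambda,\lambda_0)=\alpha(\pi,\pi_0)$ for some scalar $\alpha$ by exhibiting enough tight integer points to pin down every $\lambda_{H',c}$; denote by $\mu_{H',c}$ the coefficient of $x_{H',c}$ in \eqref{eq:general}, so that the target is $\lambda_{H',c}=\alpha\mu_{H',c}$.

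The tight points I would use are of four types, each easily verified to lie in $\polytope(G,C)$ and attain the value $|\Ccal'|$ in \eqref{eq:general}: (P1) $\sum_{i=1}^{|\Ccal'|} e(\{v_i\},c_i)$ with distinct $v_i\in H$ and distinct $c_i\in\Ccal'$; (P2) $e(H',c)$ for $H'\supseteq H$ and any $c\in\Ccal$; (P3) $e(H\setminus\{v\},c_0)+e(\{v\},c)$ for a non-cut-vertex $v$ of $H$ (guaranteed to exist since $|H|\ge 2$) and distinct colors $c_0,c$ with $c\in\Ccal'$; and, for each $H'\in\consets_\cap(H)\setminus\consets_\supset(H)$ and each $c\in\Ccal$, (P4) $e(H',c)+\sum_i e(\{v_i\},c_i)$ with $v_i\in H\setminus H'$ and $c_i\in\Ccal'\setminus\{c\}$ distinct, taking $|\Ccal'|-1$ singletons (or $|\Ccal'|$ when $c\notin\Ccal'$) in the subcase $\delta_{H'}\le 0$, and $|H|-|H\cap H'|$ singletons in the subcase $\delta_{H'}\ge 1$.

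Next I would extract the coefficients of $\lambda$. Comparing two (P1)-points that differ by swapping one $v_i$ with a fresh vertex of $H$ (available since $|H|\ge|\Ccal'|+1$) forces $\lambda_{\{v\},c}$ to be a constant $\beta_c$ in $v\in H$ for each $c\in\Ccal'$. To merge the $\beta_c$'s into a single value $\alpha$, I would use (P3) with $c_0\in\Ccal'$ when $|\Ccal'|\ge 3$, varying $c$ over $\Ccal'\setminus\{c_0\}$ and then shifting $c_0$; when $|\Ccal'|=2$ this argument collapses and I would instead invoke $|\Ccal|\ge 3$ to pick $c_0\in\Ccal\setminus\Ccal'$, in which case the same (P3)-point is still tight (the coefficient of $e(H\setminus\{v\},c_0)$ equals $|\Ccal'|-1$ irrespective of whether $c_0\in\Ccal'$) and lets $c$ range over both elements of $\Ccal'$. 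Substituting $\beta_c=\alpha$ into any (P1)-point gives $\lambda_0=|\Ccal'|\alpha$. The remaining coefficients then fall out: (P2) yields $\lambda_{H',c}=\lambda_0=\alpha\mu_{H',c}$ for $H'\supseteq H$; for $H'$ with $H\cap H'=\emptyset$, the tight point $e(H,c')+e(H',c)$ (with any $c'\ne c$, using $|\Ccal|\ge 2$) remains tight and forces $\lambda_{H',c}=0$; and for each $H'\in\consets_\cap(H)\setminus\consets_\supset(H)$, the (P4)-point isolates $\lambda_{H',c}$ as $\lambda_0$ minus a known multiple of $\alpha$, which simplifies to $\alpha\mu_{H',c}$ in both $\delta_{H'}$-subcases.

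The main obstacle is the $|\Ccal'|=2$ case of equating $\beta_{c_1}=\beta_{c_2}$: the (P1)-points are symmetric in $c_1$ and $c_2$ and give no discrimination. This is precisely where the hypothesis $|\Ccal|\ge 3$ is essential---one must borrow a color from $\Ccal\setminus\Ccal'$ to break the symmetry via (P3). A smaller technical concern is ensuring $H\setminus\{v\}$ induces a connected subgraph, which I would handle by choosing $v$ to be a non-cut-vertex of $H$.
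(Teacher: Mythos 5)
Your proposal is correct and follows essentially the same route as the paper's proof: validity from Proposition~\ref{prop:validade6}, reduction of the cases $|H|=1$ and $|\Ccal^\prime|=1$ to Theorem~\ref{theorem:facet1}, and then the standard indirect facet argument using the same families of tight points (collections of singletons in $H$, supersets of $H$, the point $e(H\setminus\{v\},c_0)+e(\{v\},c)$, and combinations of an $H^\prime\in\consets_\cap(H)$ with singletons in $H\setminus H^\prime$), with the case split governed by the sign of $\delta_{H^\prime}$ exactly as in the paper. You also correctly locate the one place where $|\Ccal|\geq 3$ is needed, namely borrowing a color outside $\{c^\prime,c^{\prime\prime}\}$ to equate the per-color constants, which is precisely the paper's ``$u=v$'' subcase.
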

\begin{proof}
Due to space limitation, we omit the proof that inequality~\eqref{eq:general} is valid for ~$\polytope(G,C)$.
If $|H|=1$, then~\eqref{eq:general} is equivalent to~\eqref{ineq:vertices}.
If~$|\Ccal^{\prime}|=1$, then~\eqref{eq:general} reduces to~\eqref{eq:ineq1}. 
In both cases, it is facet-defining by Theorem~\ref{thm:facet}. 
Suppose from now on that~$|\Ccal^\prime| \geq 2$ and~$|\Ccal^\prime| \leq |H|-1$, and so~$|H| \geq 3$.

Let~$(\pi, \pi_0)$ be inequality~\eqref{eq:general} and let~$F$ be the face induced by~$(\pi, \pi_{0})$, that is, \(F=\{ x \in \polytope(G,C) \colon \pi x = \pi_0\}=\{x \in \polytope(G,C)  \colon A(x)+B(x)=|\Ccal^\prime|\}.\)
We shall prove that~$F$ is a facet by supposing that~$F \subseteq \{x \in \polytope(G,C) \colon \lambda x = \lambda_0 \}$ and showing that~$(\lambda,\lambda_0) = \theta (\pi,\pi_0)$, for some scalar~$\theta$. 
Actually, since~$F\neq \emptyset$, it is enough to prove~$\lambda=\theta \pi$, which implies $\lambda_0=\theta \pi_0$.
Let~$H^\prime \in \consets(G)$ and~$c^\prime \in \Ccal$. 

First we show the null entries in~$\lambda$.
There are two cases to be considered.
Since~$|\Ccal^\prime|\geq 2$, there is~$c^{\prime \prime} \in \Ccal^\prime \setminus \{c^\prime\}$. 
If~$H^\prime \notin \consets_\cap(H)$, then $e(H,c^{\prime\prime})$ and~$e(H,c^{\prime\prime})+e(H^\prime,c^{\prime})$ belong to~$F$.
As a consequence, we have~$\lambda_{H^\prime,c^\prime}=0$.
Suppose now that~$H^\prime \in \consets_\cap(H)$, $| H\setminus H^\prime| \geq |\Ccal^\prime|$, and~$c^\prime \notin \Ccal^\prime$.
Since~$|H \setminus H^\prime| \geq |\Ccal^\prime|$, we can define a vector~$x \in \polytope(G,C)$ that assigns each color in~$\Ccal^\prime$ to a different vertex of $H \setminus H^\prime$.
Hence, we have~$A(x)=\sum_{c \in \Ccal^\prime} \max(|\Ccal^\prime| -(|H|-1),1)= |\Ccal^\prime|$ and~$B(x)=0$. 
Similarly, the vector~$x^\prime=x+e(H^\prime, c^\prime) \in \polytope(G,C)$ satisfies $A(x^\prime)=|\Ccal^\prime|$ and~$B(x^\prime)=0$. 
Therefore, both~$x$ and~$x^\prime$ belong to~$F$ which implies~$\lambda_{H^\prime, c^\prime}=0$.

Now let that~$H^\prime \in \consets_\cap(H)$, $| V(H)\setminus V(H^\prime)| \geq |\Ccal^\prime|$, and~$c^\prime \in \Ccal^\prime$.
Let us define~$\theta=\lambda_{H^\prime, c^\prime}$.
We next show that~$\lambda_{H^{\prime\prime}, c^{\prime\prime}}=\theta$ for every~$H^{\prime\prime} \in \consets_{\cap}(H)$ with~$|H \setminus H^{\prime\prime}| \geq |\Ccal^\prime|$ and every~$c^{\prime\prime}\in \Ccal^\prime$.
Let us fix such~$H^{\prime\prime}$ and~$c^{\prime\prime}$.
Consider first the case in which~$H^\prime=\{u\}$ and~$H^{\prime\prime}=\{v\}$.
By transitivity, it suffices to analyze the following two subcases:
\begin{enumerate}
\item[$c^{\prime}=c^{\prime\prime}$:]
Let~$x\in \polytope(G,C)$ be the solution obtained by assigning~$c^\prime$ to~$u$ and each color in~$\Ccal^\prime \setminus \{c^\prime\}$ to a different vertex of~$H\setminus\{u,v\}$. 
This is possible because $|\Ccal^\prime| \leq |H|-1$ and~$c^\prime \in \Ccal^\prime$. 
Let us define~$x^\prime = x-e(u, c^\prime)+e(v,c^\prime) \in \polytope(G,C)$.
Observe that $A(x)=A(x^\prime)=|\Ccal^\prime|$ and~$B(x)=B(x^\prime)=0$. 
Therefore, we obtain~$x,x^\prime \in F$. 
It follows that~$\lambda(x-x^\prime)=0$ or still~$\theta=\lambda_{u, c^\prime}=\lambda_{v, c^\prime}$.

\item[$u=v:$]
Since $|H|\geq 3$, there is a vertex~$z$ of~$H$ such that~$K:=H-z$ is connected.
As~$|\Ccal | \geq 3$, there must exist $\hat c \in \Ccal \setminus \{c^\prime, c^{\prime\prime}\}$. 
Let~$x=e(z, c^\prime)+e(K, \hat c)$ and~$x^\prime =e(z, c^{\prime\prime})+e(K, \hat c)$ and observe that these vectors belong to~$\polytope(G,C)$.
Since $c^\prime,c^{\prime\prime}\in \Ccal^\prime$ and~$|V(H) \setminus V(K)|=1$, we have~$A(x)+B(x)=A(x^\prime)+B(x^\prime)=|\Ccal^\prime|$ regardless whether~$\hat c$ belongs to~$\Ccal^\prime$ or not. 
Again, we obtain~$x,x^\prime \in F$ and so $\lambda_{z, c^\prime}=\lambda_{z, c^{\prime\prime}}$. 
By the previous item,  we conclude that~$\theta=\lambda_{u, c^\prime}=\lambda_{z, c^\prime}=\lambda_{z, c^{\prime\prime}}=\lambda_{u, c^{\prime\prime}}$. 
\end{enumerate}

Now, let~$x \in \polytope(G,C)$ be the solution obtained by assigning~$c^{\prime\prime}$ to~$H^{\prime\prime}$ and each color in~$\Ccal^\prime \setminus \{c^{\prime\prime}\}$ to a different vertex in~$H \setminus H^{\prime\prime}$.
Since~$|H\setminus H^{\prime\prime}| \geq |\Ccal^\prime|$, this solution is possible and still leaves a vertex~$z \in H \setminus H^{\prime\prime}$ uncolored.
We define~$x^\prime=x-e(H^{\prime\prime}, c^{\prime\prime})+e(z, c^{\prime\prime})$.
Thus, we obtain 
\begin{align*}
	A(x)& =\sum_{c \in \Ccal^\prime \setminus\{c^{\prime\prime}\}} \max(|\Ccal^\prime|-(|H|-1),1)+ \max(|\Ccal^\prime| - |H\setminus H^{\prime\prime}|,1)=|\Ccal^\prime|,\\
	A(x^\prime) &  = \sum_{c \in \Ccal^\prime} \max( |\Ccal^\prime| - (|H|-1), 1 ) = |\Ccal^\prime| \\
\end{align*}
and~$B(x)=B(x^\prime)=0$. 
Therefore, we conclude that~$x$ and~$x^\prime$ belong to~$F$, and so~$\lambda_{H^{\prime\prime}, c^{\prime\prime}}=\lambda_{z, c^{\prime\prime}}=\theta$. 

Finally, suppose that~$H^\prime \in \consets_\cap(H)$,  $|H\setminus H^\prime|< |\Ccal^\prime|$, and~$c^\prime \in \Ccal$.
We shall prove that~$\lambda_{H^\prime, c^\prime}=(|\Ccal^\prime| - |H\setminus H^\prime|)\theta$.
Let~$x\in \polytope(G,C)$ be obtained by assigning~$c^\prime$ to~$H^\prime$ and~$|H\setminus H^\prime|$ colors from~$C^\prime \setminus\{c^\prime\}$ to different vertices of~$H\setminus H^\prime$. 
Regardless whether~$c^\prime$ belongs to~$\Ccal^\prime$ or not, it holds that~$A(x)+B(x)=|H\setminus H^\prime| + (|\Ccal^\prime|-|H\setminus H^\prime|) =|\Ccal^\prime|$ and thus~$x\in F$.

Let~$S$ be the set of colors in $\Ccal^\prime \cup \{c^\prime\}$ assigned by~$x$. 
We define~$x^\prime$ from~$x$ by removing~$c^\prime$ and assigning a subset~$S^\prime \subseteq \Ccal^\prime \setminus (S\setminus \{c^\prime\} )$ of colors, where $|S^\prime|=|\Ccal^\prime|-|V(H)\setminus V(H^\prime)|$, to distinct vertices in~$H \cap H^\prime$. 
Note that there is such an $S^\prime$ since~$|S \setminus \{c^\prime\}| \leq |H\setminus H^\prime|$.
Besides, $x^\prime$ is a possible solution because \(|H \cap H^\prime| = |H| - |H \setminus H^\prime| > |\Ccal^\prime| -| H \setminus H^\prime |\) since~$|H| > |\Ccal^\prime| $.
This implies~$x^\prime \in \polytope(G,C)$.
Moreover, as $A(x^\prime)=|\Ccal^\prime|$ and~$B(x^\prime)=0$, we have $x\in F$. 
Therefore, it follows that~$0=\lambda(x-x^\prime)=\lambda_{H^\prime, c^\prime}-|S^\prime| \theta$.
This completes the proof  that~$F$ is a facet of~$\polytope(G,C)$. 
\end{proof}

\section{Computational Experiments}
\label{sec:experiments}
We restricted our computational experiments to path instances. In this case, the number of connected subgraphs is polynomial so that a column generation approach for \eqref{fob}-\eqref{ineq:integer} is not mandatory. Besides, the CR on paths is closely related to the Connected Assignment Problem in Arrays (CAPA), a problem recently introduced in~\cite{CamSoaMacLim2020} to model an important application in 4G mobile networks.  

The experiments were run on an AMD FX-4300 Quad-Core (3.80 GHz) with 6 GB DDR3 (1066 MHz) and Ubuntu 18.04 LTS of 64 bits.  We evaluated the reduction in the integrality gaps provided by inequalities~\eqref{eq:ineq1}. We used the CPLEX 12.6.1 optimizer to solve the LP programs. The inequalities were added with the CPLEX addCuts method.

For each tested instance, we calculated the percentage gaps $\% G_i=(LP_i-OPT)/OPT$, $i \in \{0,1\}$, where $OPT$ is the optimal value of the integer program, $LP_0$ is the optimal value of the linear program given by constraints~\eqref{eq:ineq1}, for all $c\in \Ccal$ and all $H\in \consets(G)$ with $|H|=1$ or $H=V(G)$ (which are exactly \eqref{ineq:vertices} and a lifting of \eqref{ineq:colors}), and $LP_1$ is the optimal value of the linear program given by constraints~\eqref{eq:ineq1}, for all $c\in \Ccal$ and all $H\in \consets(G)$. In addition, we calculated $\% GR=(\% G_0-\% G_1)/\% G_{0}$, which represents the percentage gap reduction.

\subsection{Computational results for CR on paths}
Three hundred test instances were randomly generated, 20 for each pair $(n,k=\alpha\lceil n/4\rceil)$, with $n\in\{20,25,30,35,40\}$ and $\alpha\in \{1,2,3\}$. 
The initial color of each vertex was uniformly chosen from the range $[1,k]$; unit weights were assumed.

Figure~\ref{fig:CR-GR} presents the percentage gap reduction per instance. In the $x$-axis the instances are ordered according to the value of $n$ and then $k$. 
Points over the bottom horizontal line correspond to cases where constraints~\eqref{eq:ineq1} were not able to improve the gap, whereas points over the top horizontal line represent instances where the gap was closed after adding these constraints. It is worth mentioning that the solution giving $LP_0$ was not integer for every instance (although $LP_0=OPT$ in some cases).
Overall, $\% G_1$ improved $\% G_0$ in 13.67\% of the instances leading to an average $\% GR$
(for the 300 instances) of 12.73\%. More importantly, the solution yielding $LP_1$ was integer in 76.33\% of the cases. It means that the added inequalities were fundamental to prove optimality without branching.


Figure~\ref{fig:CR-GR2} plots $\% GR$ versus $\% G_0$. First, we can observe that $\% G_0$ is already small (usually less than 5\%), which indicates the strength of the basic formulation. Even when $\% G_0$ is small, the additional inequalities could improve the upper bound, either closing the gap (as show the points in the top horizontal line) or reducing it (as show the points in the intermediate horizontal region). 

\begin{figure}
  \centering
  \subfigure[Percentage gap reduction per instance. \label{fig:CR-GR}]{\includegraphics[scale=0.7, angle=90]{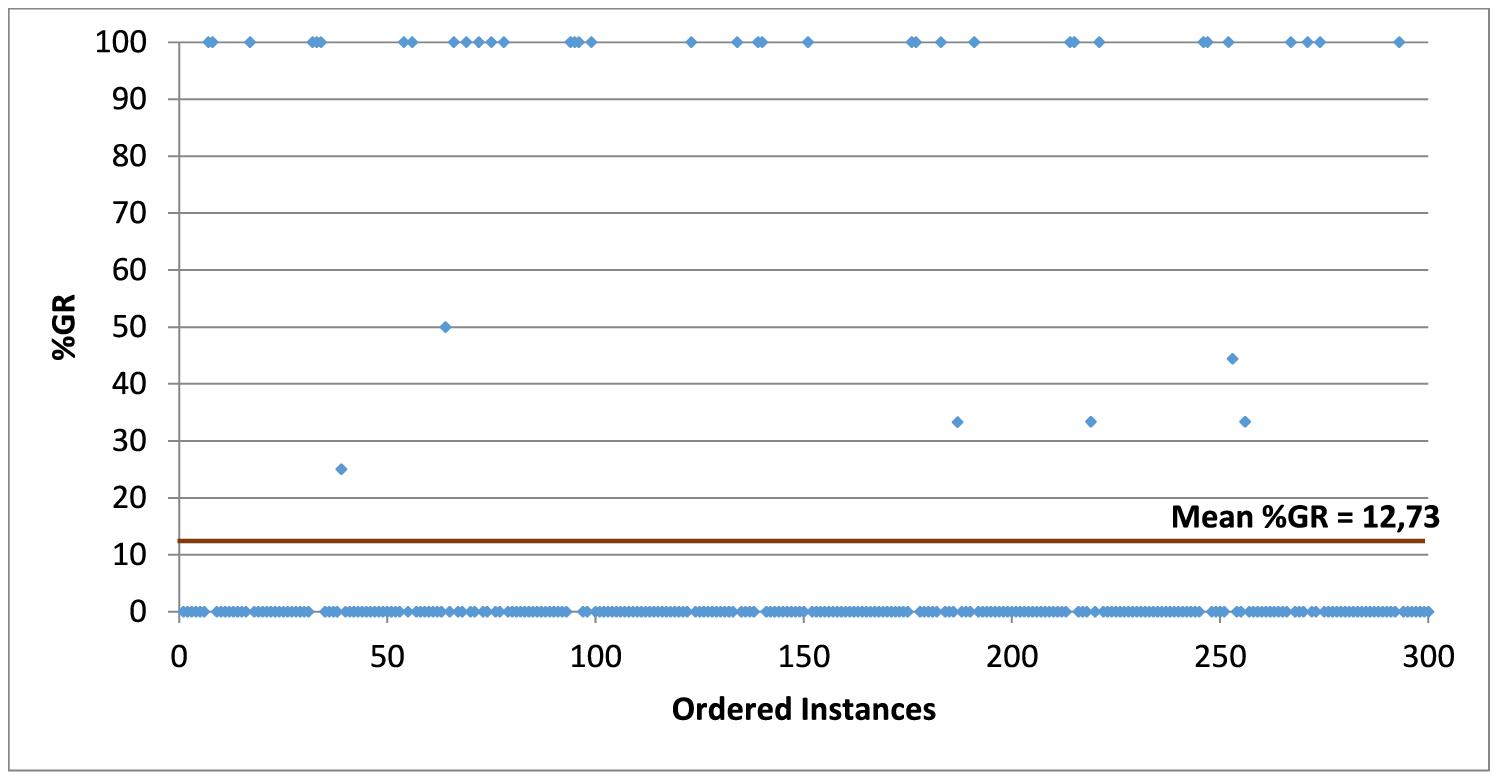}}
  \hfil
  \subfigure[Percentage gap reduction versus initial percentage gap. \label{fig:CR-GR2}]{\includegraphics[scale=0.7, angle=90]{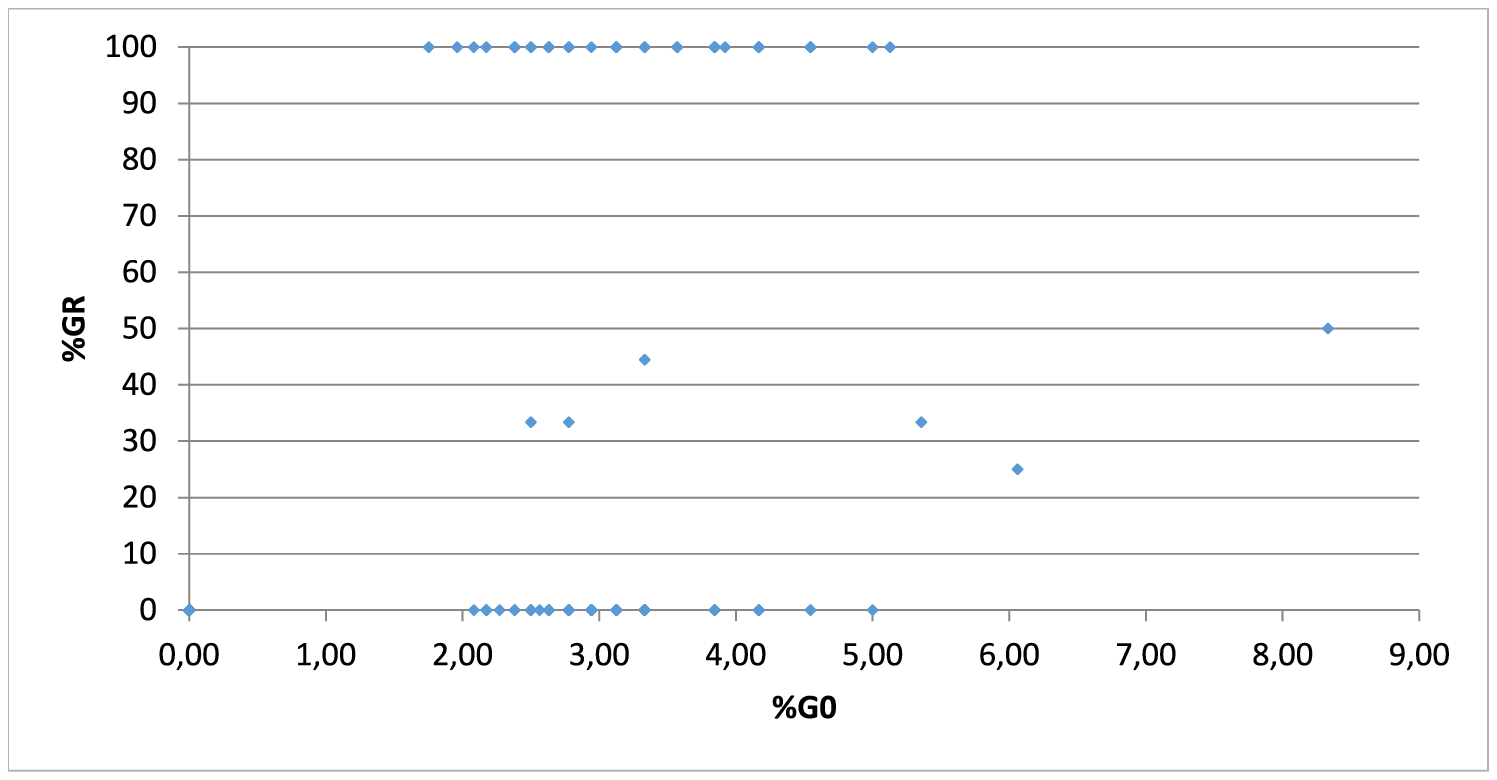}}
  \caption{CR- Percentage gap reduction.}
\end{figure}

\begin{figure}
\centering
\includegraphics[scale=0.6]{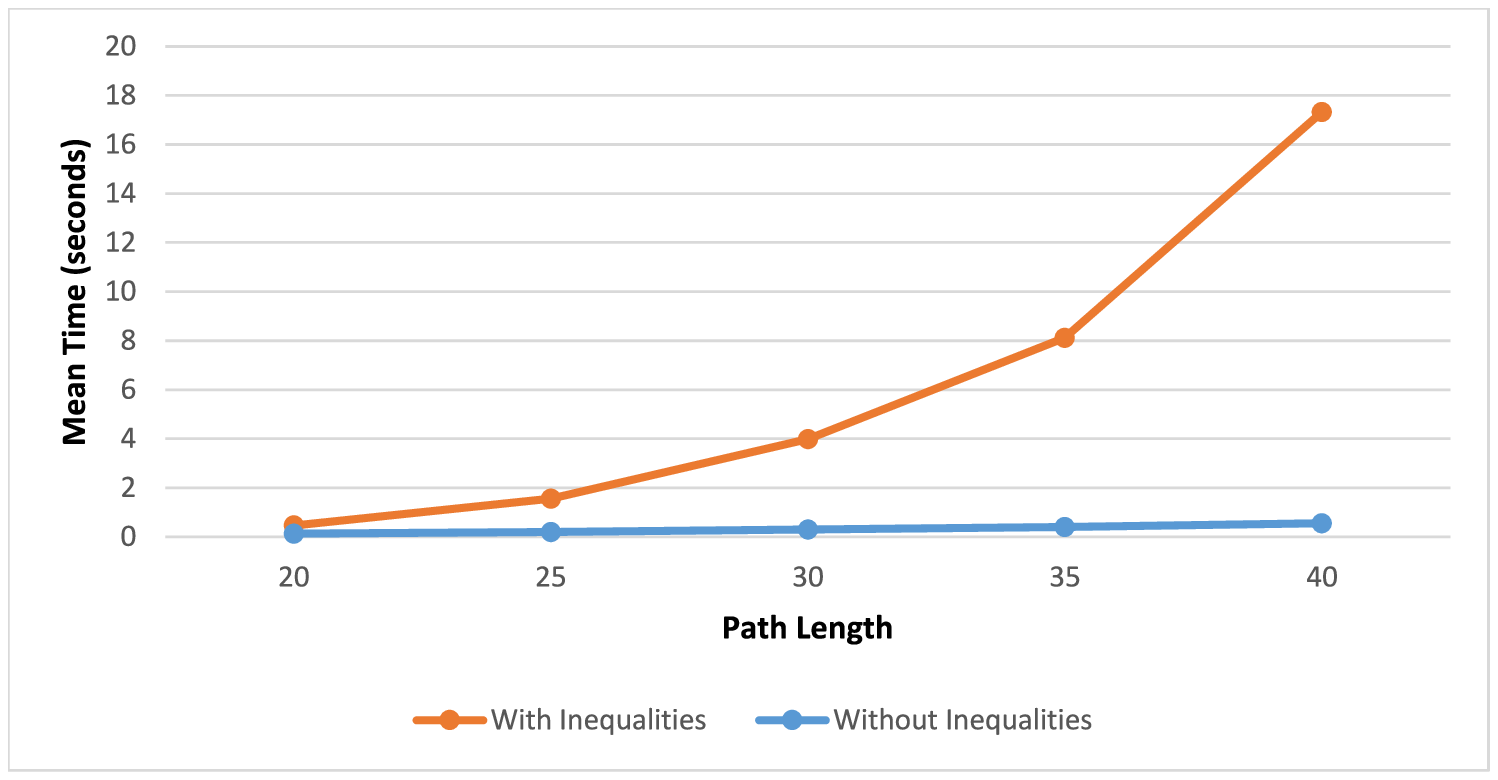}
\caption{CR - Formulation Processing Time.}
\label{fig:CR-T}
\end{figure}

This gain in the gap or proof of optimality had a cost. Figure~\ref{fig:CR-T} shows that the running time grows exponentially with $n$ when the inequalities are added. It suggests the design of a smart cutting plane process to circumvent this drawback.

\subsection{Computational results for CAPA}
Given an array with $n$ positions, a set with $k$ symbols and a gain matrix $[\rho_{ij}]_{k\times n}$, the CAPA problem consists in assigning at most one symbol $i$ to each position $j$ so as to get a gain $\rho_{ij}$. The sum of the gains is to be maximized under the constraint that repeated symbols must appear in consecutive positions of the array. This problem could be seen as a variant of the CR problem on paths where we are given a $n$-vertex uncolored path and want to find a convex recoloring. Now, the weight of the recoloring depends on the color assigned to each vertex. In other words, the solutions of CAPA can be defined by constraints~\eqref{ineq:vertices}-\eqref{ineq:integer} and the coefficient of $x_{H,c}$ in the objective function~\eqref{fob} is $\sum_{j\in H}\rho_{cj}$ (this formulation for CAPA was introduced in \cite{LimMacCav2016}). Therefore, the polyhedral results obtained here can be directly applied to CAPA.

Again, 300 instances (with the sizes described as before) were generated. Now, the gain matrix $\rho$ is obtained by the same simulator used in \cite{LimMacCav2016}. It models and simulates an LTE-like cellular system. Matrix $\rho$ corresponds to achievable data rates according to channel quality indicators that capture the most relevant propagation effects of cellular systems. 

Figures~\ref{fig:CAPA-GR}, \ref{fig:CAPA-GR2} and \ref{fig:CAPA-T} show similar graphics for CAPA. The average $\% GR$
was of 6.73\%. $\% G_1$ improved $\% G_0$ in only 8.33\% of the instances. However, $\% G_0$ was already very small (usually less than 1\%). Now there were more instances where the initial gap was reduced but not closed. The solution yielding $LP_1$ was integer in 86\% of the 300 CAPA instances.
 Again, an exponential increase with~$n$ could be observed in the running time when the inequalities are added to the formulation.
  In the application context, it is worth mentioning that the used instance sizes are compatible with real scenarios, but the processing times are already prohibitive. 



\begin{figure}
  \centering
  \subfigure[Percentage gap reduction per instance. \label{fig:CAPA-GR}]{\includegraphics[scale=0.7, angle=90]{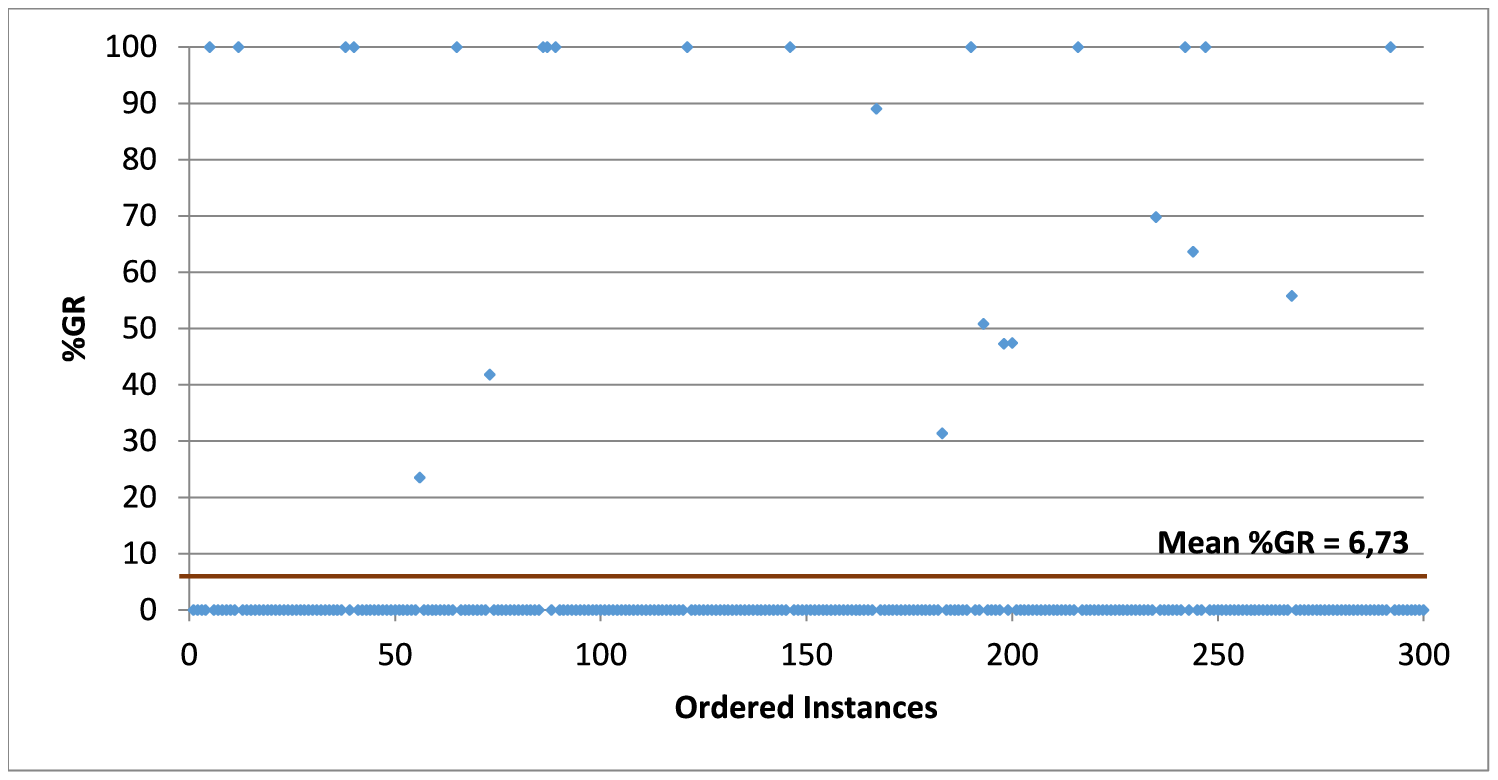}}
  \hfil
  \subfigure[Percentage gap reduction versus initial percentage gap. \label{fig:CAPA-GR2}]{\includegraphics[scale=0.7, angle=90]{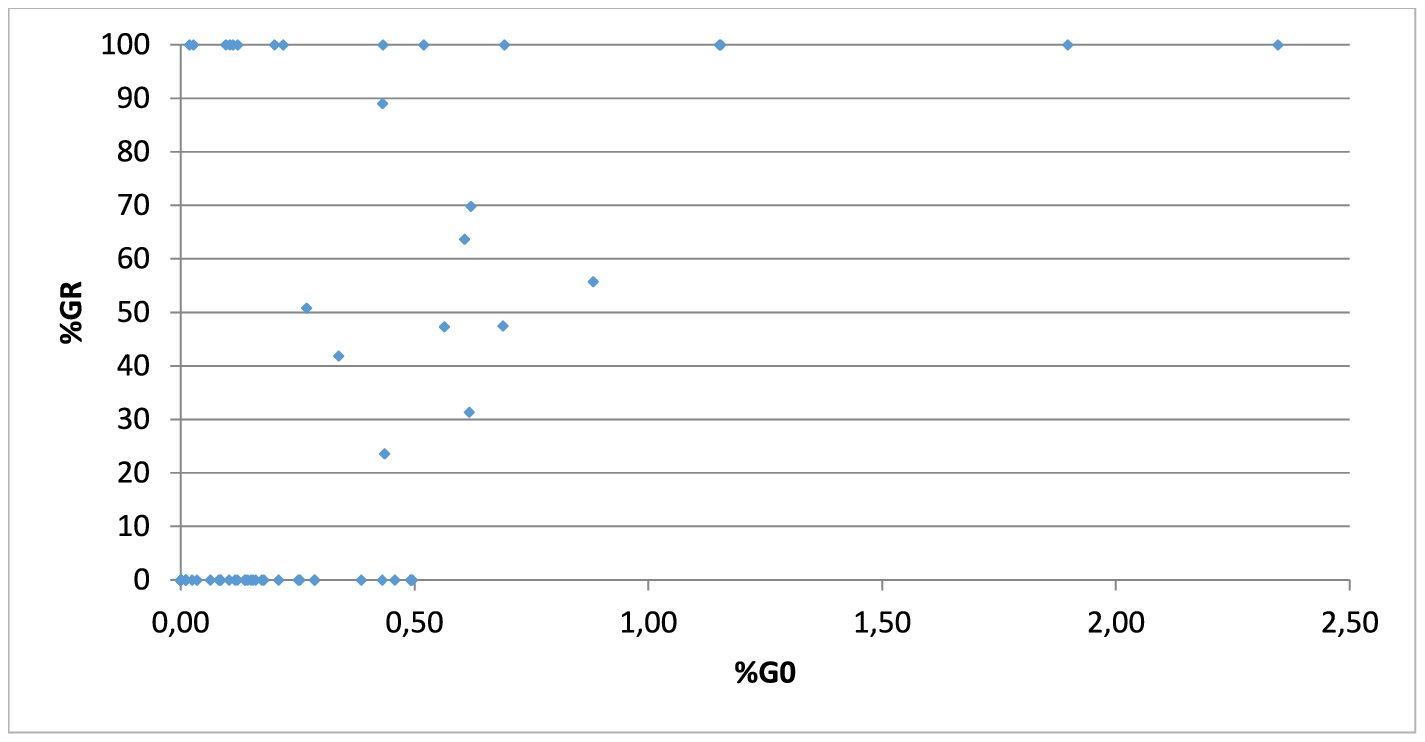}}
  \caption{CAPA - Percentage gap reduction.}
\end{figure}

\begin{figure}
\centering
\includegraphics[scale=0.7]{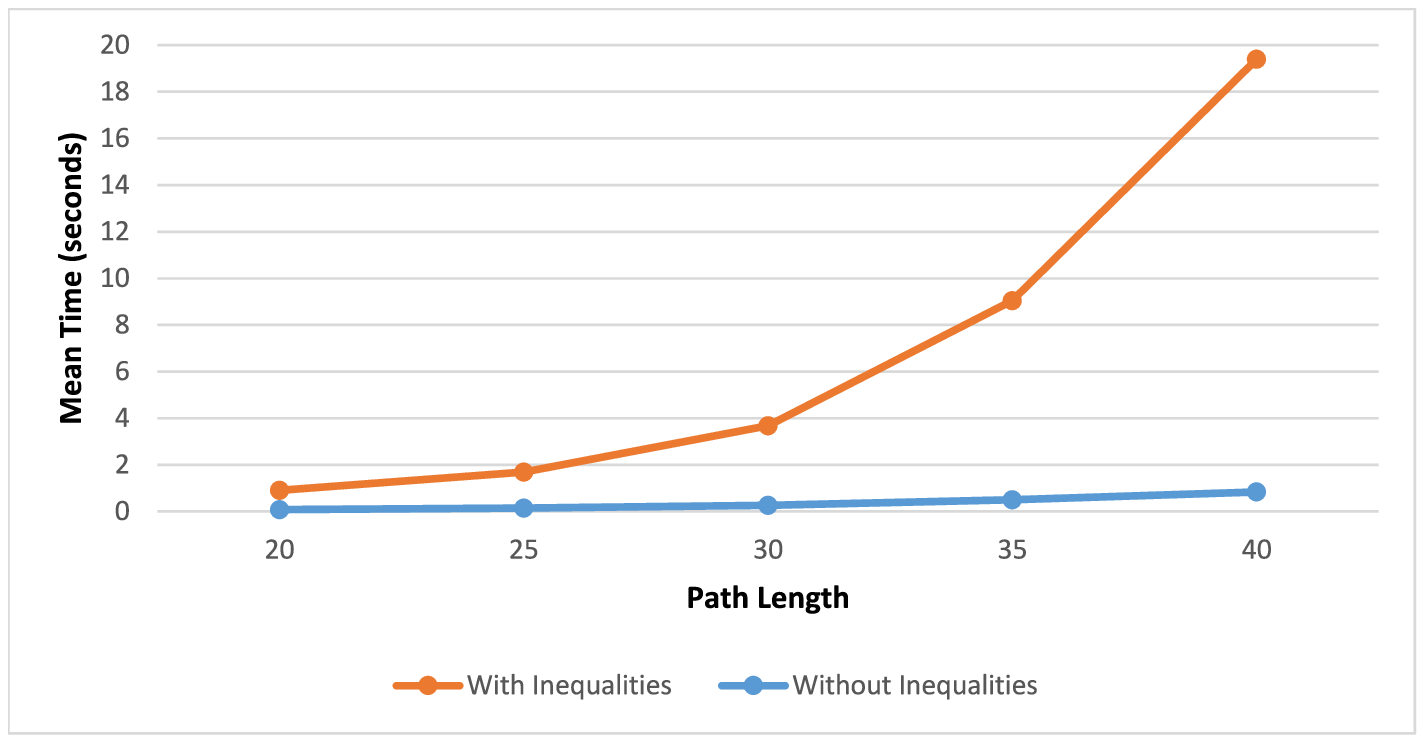}
\caption{CAPA - Formulation Processing Time.}
\label{fig:CAPA-T}
\end{figure}
\section{Conclusion and further research}

We presented a polytope for \CR\ based on the idea connected subgraphs and studied its facial structure.
We devised classes of facet-defining inequalities and implemented cutting plane algorithms for \CR\ on paths and for an application on mobile networks.
The computational experiments revealed that the linear relaxation of the formulation already gives a small gap (usually less than 5\%), which indicates the strength of this formulation.
Furthermore, they showed that the added inequalities were fundamental to prove optimality without branching.

Motivated by applications in the study of phylogenetic trees~\cite{MorSni08} and by the promising computational results of column-generation approaches presented in~\cite{ChoErdKimShi17,Mou17}, we aim to design efficient separation routines for the proposed inequalities, implement a branch-cut-and-price algorithm for \CR\ on trees and run experiments to evaluate its performance.

\bibliographystyle{abbrv}%


\end{document}